\definecolor{mygreen}{RGB}{11, 218, 81}
\definecolor{myblue}{RGB}{31, 81, 255}
\definecolor{mypurple}{RGB}{191, 64, 191}
\definecolor{myorange}{RGB}{255, 172, 28}
\newcommand{\abs}[1]{\ensuremath{\left| #1 \right|}}
\newcommand{\norm}[2]{\ensuremath{\left\lVert #1 \right\rVert}_#2}
\newcommand*{\transp}{{\mkern-1.5mu\mathsf{T}}}
\DeclareMathOperator*{\argmin}{arg\,min}
\newtheorem{remark}{Remark}[section]
\newtheorem{proposition}{Proposition}
\newtheorem{lemma}{Lemma}
\newtheorem{theorem}{Theorem}
\newtheorem{corollary}{Corollary}
\begin{document}

\title{Model-based learning for multi-antenna multi-frequency location-to-channel mapping}

\author{
	Baptiste Chatelier, Vincent Corlay, Matthieu Crussière, Luc Le Magoarou
    \thanks{Baptiste Chatelier is with Mitsubishi Electric R\&D Centre Europe, Univ Rennes, INSA Rennes, CNRS, IETR-UMR 6164 and b\raisebox{0.2mm}{\scalebox{0.7}{\textbf{$<>$}}}com, Rennes, France (email: baptiste.chatelier@insa-rennes.fr).}
    \thanks{Vincent Corlay is with Mitsubishi Electric R\&D Centre Europe and b\raisebox{0.2mm}{\scalebox{0.7}{\textbf{$<>$}}}com, Rennes, France (email: v.corlay@fr.merce.mee.com).}
    \thanks{Matthieu Crussière and Luc Le Magoarou are with Univ Rennes, INSA Rennes, CNRS, IETR-UMR 6164 and b\raisebox{0.2mm}{\scalebox{0.7}{\textbf{$<>$}}}com, Rennes, France (email: \{matthieu.crussiere ; luc.le-magoarou\}@insa-rennes.fr).}
	}



\maketitle

\begin{abstract}
    Years of study of the propagation channel showed a close relation between a location and the associated communication channel response. The use of a neural network to learn the location-to-channel mapping can therefore be envisioned. The Implicit Neural Representation (INR) literature showed that classical neural architecture are biased towards learning low-frequency content, making the location-to-channel mapping learning a non-trivial problem. Indeed, it is well known that this mapping is a function rapidly varying with the location, on the order of the wavelength. This paper leverages the model-based machine learning paradigm to derive a problem-specific neural architecture from a propagation channel model. The resulting architecture efficiently overcomes the spectral-bias issue. It only learns low-frequency sparse correction terms activating a dictionary of high-frequency components. The proposed architecture is evaluated against classical INR architectures on realistic synthetic data, showing much better accuracy. Its mapping learning performance is explained based on the approximated channel model, highlighting the explainability of the model-based machine learning paradigm.
\end{abstract}

\begin{IEEEkeywords}
    Model-based machine learning, Implicit Neural Representations, Spectral bias, Sparse representation, MIMO
\end{IEEEkeywords}

\section{Introduction}\label{sec:introduction}
    \IEEEPARstart{F}{or} the past decades, signal processing methods have been used to improve communication systems. Such methods are model-based: they can present a high bias but benefit from a reasonable complexity. With the emergence of easily accessible computational power, artificial intelligence (AI)/machine learning (ML) has emerged as a promising alternative to signal processing methods in many communication problems. By essence, AI/ML methods are data-based: they consequently present a low bias due to their intrinsic adaptability capabilities. However, the training prerequisites of such methods entail substantial computational and sample complexities. Recently, researchers have focused on bridging the gap between those two paradigms using a hybrid approach: model-based machine learning~\cite{Shlezinger2023}. This approach proposes to use models from signal processing, to structure, initialize, and train learning methods from AI/ML. The underlying goal is to reduce the bias of signal processing methods by making models more flexible, while guiding AI/ML methods to reduce their complexity.
    
    In the past few years, model-based machine learning have showed great results in channel estimation problems~\cite{Hengtao18,Xiuhong21,yassine2022,Chatelier2022}, angle of arrival estimation~\cite{Shmuel2023,chatelier24_diffMUSIC}, channel charting~\cite{Yassine2022a,yassine2023modelbased,yassine2023charting}, but also in integrated sensing and communication scenarios~\cite{mateosramos2023semisupervised}. It is proposed to further explore this paradigm by studying the location-to-channel mapping learning: as the propagation channel coefficients are closely related to the user's location, one can use a neural network to learn this specific mapping. Upon training completion, one only has to input a location to the neural network to acquire the channel coefficients at the given location. In order to do so, one can use a physical propagation model to derive a model-based (MB) neural architecture specifically adapted for the location-to-channel mapping learning.

    This approach would be beneficial in many applications: channel estimation, secure communication mechanisms, resource allocation, interference management, and also radio-environment compression. Indeed, if one achieves near perfect learning of the location-to-channel mapping, it could be more efficient to only store the weights of the trained neural network rather than directly storing the channel coefficients.

    Learning continuous mappings with neural networks is known as the Implicit Neural Representation (INR) problem. After its great success in the resolution of image processing problems such as novel view synthesis~\cite{mildenhall2020nerf}, researchers have started to establish theoretical results on the INRs mapping learning capabilities~\cite{Ramasinghe2021,Zheng2021,Yuce2022,Ramasinghe2022a,Saratchandran2024,Saratchandran2024a}. While classical architectures, such as multi-layer perceptrons (MLPs), are universal function approximators~\cite{Cybenko1989ApproximationBS,Hornik89}, it has been shown that they exhibit a bias towards learning low-frequency functions, a phenomenon known as the spectral bias~\cite{Rahaman2019,ijcai2021p304,JohnXu2020}. This makes classical architectures unsuitable for the learning of rapidly varying functions. To address this limitation, several specialized architectures have been developed: random Fourier features (RFFs)~\cite{rahimi2007,tancik2020fourfeat}, sinusoidal representation networks (SIRENs)~\cite{sitzmann2019siren} or Gaussian activated radiance fields (GARFs)~\cite{Chng2022}. These architecture aim to incorporate high-frequency content into the neural model. The principles underlying these spectral-bias-resistant designs can be categorized into two approaches: embedding specialization within traditional \texttt{ReLU}-MLPs or embedding replacement.
    \begin{itemize}
        \item \noindent\textbf{Embedding specialization.} This approach consists on the projection of the input on a higher dimensional space containing high-frequency content, and is commonly referred to as the \textit{positional embedding} method. This approach, firstly introduced in~\cite{mildenhall2020nerf}, aligns with earlier findings in~\cite{Rahaman2019}, where the authors showed that introducing a particular high-frequency embedding allowed to ease the learning of high-frequency content in the target mapping. Further work focused in the  design of a well adapted embedding layer for the target mapping learning, while the rest of the neural architecture is a traditional \texttt{ReLU}-MLP. This approach has been used in~\cite{rahimi2007,mildenhall2020nerf,tancik2020fourfeat,Pumarola2020,Park2020,Du2021}.
        \item \noindent\textbf{Embedding replacement.} Conversely, this approach suggests omitting the initial high-frequency embedding and instead specializing the activation functions of the MLP. This method recently garnered significant attention: in~\cite{sitzmann2019siren}, the authors proposed to replace \texttt{ReLU} non-linearities by sine functions, achieving high performance in high-frequency mapping learning. Nevertheless, it has been shown that using sine non-linearities can lead to high sensitivity to the network parameters initialization. To address this, the authors in~\cite{Chng2022} proposed replacing sine non-linearities with Gaussian ones, achieving strong performance in image reconstruction problems without requiring intricate network parameters initialization schemes. Additionally, in~\cite{Ramasinghe2021}, the authors showed that sine non-linearities were part of a broad class of non-linearities that permitted MLPs to learn high-frequency content. They also showed that using non-periodic non-linearities allowed to obtain good performance in novel view synthesis problems. Finally, it has recently been shown in~\cite{Saragadam2023} that complex Gabor wavelets could be used as activation functions, yielding good performance in a wide range of image processing problems.
    \end{itemize}
    The location-to-channel mapping presents a high-frequency spatial dependence, on the order of the operating wavelength, making its learning a remarkably complex problem. One may then pose the subsequent inquiries: \textit{Are classical INR architectures able to learn the location-to-channel mapping? Is a model-based approach able to learn this mapping? Does the model-based approach outperform INR architectures in terms of learning performance and complexity?}

    \noindent\textbf{Contributions.} It is proposed to leverage the capabilities of model-based machine learning for the location-to-channel mapping learning. To maintain generality, this mapping learning problem is examined over multiple antennas and multiple frequencies. The following contributions are presented in this paper:
        \begin{itemize}
        \item A theoretical study of a physics-based channel model, using Taylor expansions, allows to clearly separate low and high-frequency spatial contents (Lemma~\ref{lemma:taylor_approx}, Proposition~\ref{proposition:channel_approx}, Proposition~\ref{proposition:matrix_approximation}).
        \item A sparse signal processing approach, using sparse representations of the channel, is presented to ensure the validity of the obtained channel approximation across the entire $\mathbb{R}^3$ space (Theorem~\ref{theorem:global_validity_approx}).
        \item A model-based neural architecture is derived from the approximated channel model. It is shown that the obtained structure shares common features with classical INR architectures, such as a spectral separation stage with a high-frequency embedding, allowing to bypass the spectral-bias issue.
        \item Experiments are conducted on realistic channels, showing the great potential of the proposed approach, both in terms of learning performance and computational efficiency. Moreover, some experiments clearly show that the proposed approach successfully learns the underlying physical models, enabling efficient generalization.
    \end{itemize}

    \noindent\textbf{Related work.} Learning continuous mappings through neural networks has been extensively studied by the image processing community. For instance, it has been used for image reconstruction problems~\cite{sitzmann2019siren,Bemana2020,Chen2021}, for 3D scene reconstruction from 2D images~\cite{mildenhall2020nerf,Yariv2020,Sitzmann2019} but also for dynamic 3D scene reconstruction from 2D images~\cite{Pumarola2020,Park2020,Tretschk2021,Du2021,Xian2021}. In the signal processing community, machine learning capabilities have been recently leveraged for various problems: e.g. ML-based channel estimation~\cite{Xuanxuan18,Mehran19,Eren20,Yu2020}. There also exist works about mapping learning in communication problems: specifically, the location/pseudo-location-to-beamformer mapping learning has been studied in~\cite{lemagoarou2022} and~\cite{lemagoarou2022_asilomar}, as well as the pseudo-location-to-best-codebook-index mapping learning in~\cite{yassine2023modelbased}. Additionally, our previous work in~\cite{chatelier2023modelbased} studied the location-to-channel mapping learning in a simplified scenario: only a single antenna and frequency were considered. This paper builds on that foundation, offering a significant extension of the previously proposed method by incorporating support for both multi-antenna and multi-frequency scenarios, thereby addressing broader and more complex use cases. In~\cite{Zhao2023}, the authors propose to adapt the neural radiance fields concept of~\cite{mildenhall2020nerf} from optical to radio-frequency signals. However, the proposed model requires complex learning strategies and long training times. Additionally, the proposed model is used for downlink channel estimation in a frequency division duplex setting, but requires the uplink channel knowledge. In this paper, the proposed approach only requires the receiver location knowledge and possesses relatively low sample complexity. Finally, in~\cite{Hoydis2023}, the authors proposed to calibrate electromagnetic properties of a scene, e.g. material permittivity and conductivity, scattering and reflection coefficients, using a differentiable ray-tracing approach. Upon completion of the calibration process, accessing the scene properties in different configurations becomes feasible. Among other applications, using the calibrated scene enables the accurate computation of the propagation channel response at different locations, through ray-tracing. The primary distinction between this differentiable ray-tracing method and the suggested approach is that the former emphasizes on calibrating a specific scene at the electromagnetic property level while considering the scene topology known. Conversely, the latter proposes a straightforward approach for the continuous location-to-channel mapping learning using a model-based neural network with no prior information about the scene topology.

    \textit{Organization.} The rest of the paper is organized as follows. Section~\ref{sec:problem_form} properly presents the location-to-channel mapping learning problem, Section~\ref{sec:expansion} the theoretical contributions. 
    Then, Section~\ref{sec:mb_archi} presents the translation of the proposed model into a neural architecture. Section~\ref{sec:experiments} proposes to evaluate the developed architecture against several baselines on realistic synthetic data. Finally, Section~\ref{sec:conclusion} introduces some conclusions and perspectives for future work.

    \textit{Notations.} Lowercase bold letters represent vectors while uppercase bold letters represent matrices. $\mathbb{R}$ and $\mathbb{C}$ respectively denotes the real and complex fields. $o$ denotes the small-$o$ Bachman-Landau notation. $\cdot ^\transp$ denotes the transpose matrix operator. $\text{diag}\left(\cdot\right)$ denotes the matrix operator constructing a diagonal matrix from a vector and $\text{vec}\left(\cdot\right)$ denotes the vectorization operator. $\mathbf{Id}_N \in \mathbb{R}^{N \times N}$ denotes the identity matrix, while $\mathbf{0}_N \in \mathbb{R}^{N \times N}$ denotes the null matrix. $\odot$ denotes the Hadamard product, and $\otimes$ denotes the Kronecker product. $\nabla_{\mathbf{a}} f\left(\mathbf{a},\mathbf{b}\right)$ denotes the gradient operator wrt. $\mathbf{a}$. $\abs{\cdot}$ denotes the absolute value for real numbers, modulus for complex numbers and cardinal for sets. $\norm{\cdot}{p}, p\in\mathbb{N}$ denotes the $\ell_p$ norm, and $\norm{\cdot}{F}$ denotes the $\ell_F$ (Frobenius) norm. $\mathcal{S}_1$ denotes the unit sphere while $\mathcal{C}_1$ denotes the unit circle. $\delta\left(t\right)$ denotes the Dirac impulse. $\mathfrak{Re} \left\{\cdot\right\}$ and $\mathfrak{Im} \left\{\cdot\right\}$ denote the real and imaginary parts. $\mathcal{U}$ denotes the uniform distribution.

\section{Problem formulation}\label{sec:problem_form}
    In this section, the physical propagation channel model is presented, and the mapping-learning problem is defined.

    Let us consider a communication system where a base station (BS) transmits information through $N_a$ antennas and $N_s$ distinct frequencies to mono-antenna user equipments (UEs). In the time domain, the propagation channel defines the filter operating on the electromagnetic waves transmitted between an emitting and receiving antenna. This filter impulse response is classically known as the channel impulse response (CIR). Considering the propagation channel over $L_p$ specular propagation paths between the $j$th BS antenna and a mono-antenna receiver located at $\mathbf{x} \in \mathbb{R}^3$ yields the following CIR definition:
    \begin{equation}
        \overline{h}_{j}\left(t,\mathbf{x}\right) = \sum_{l=1}^{L_p} \alpha_l\left(\mathbf{x}\right) \delta\left(t-\tau_l\left(\mathbf{x}\right)\right),\label{eq:model_time}
    \end{equation}
    where $\alpha_l\left(\mathbf{x}\right) \in \mathbb{R}$, resp. $\tau_l\left(\mathbf{x}\right) \in \mathbb{R}$, is the attenuation coefficient, resp. propagation delay, for the $l$th path.

    Applying the Fourier transform on the CIR yields the frequency channel response. For a given frequency $f_k$, this gives:
    \begin{align}
        h_{j}\left(f_k,\mathbf{x}\right) &=  \sum_{l=1}^{L_p} \alpha_l\left(\mathbf{x}\right)\mathrm{e}^{-\mathrm{j} 2\pi f_k \tau_l\left(\mathbf{x}\right)}\nonumber\\
        &= \sum_{l=1}^{L_p} \dfrac{\gamma_l}{d_l\left(\mathbf{x}\right)} \mathrm{e}^{-\mathrm{j}\frac{2\pi}{\lambda_k}d_l\left(\mathbf{x}\right)},\label{eq:model_freq}
    \end{align}
    where $\lambda_k \triangleq c/f_k \in \mathbb{R}$ is the wavelength associated to frequency $f_k$ and $d_l\left(\mathbf{x}\right) \in \mathbb{R}$ represents the propagation distance for the $l$th path. In further developments, the frequency dependence in Eq.~\eqref{eq:model_freq} is dropped and the notation $h_{j,k}\left(\mathbf{x}\right)$ is used. In Eq.~\eqref{eq:model_freq}, it is assumed that the attenuation is proportional to $d_l\left(\mathbf{x}\right)$ and that the delay is such that:
    \begin{equation}\label{eq:addit_delay}
        \tau_l\left(\mathbf{x}\right) \triangleq  \begin{cases}
            \frac{d_l\left(\mathbf{x}\right)}{c}, l=1\\
            \frac{d_l\left(\mathbf{x}\right)}{c} - \nu_l, l\neq 1
        \end{cases},
    \end{equation}
    where $\nu_l \in \mathbb{R}$ accounts for a supplementary delay induced by wave-matter interactions. As a result, the attenuation coefficient $\gamma_l$ becomes complex: $\forall l > 1$, $\mathfrak{Re}\left\{\gamma_l\right\}$ and $\mathfrak{Im}\left\{\gamma_l\right\}$ represent the small-scale attenuation and phase shift introduced along the $l$th path. Note that $\mathfrak{Re}\left\{\gamma_1\right\}=1$ and $\mathfrak{Im}\left\{\gamma_1\right\}=0$ when considering a Line of Sight (LoS) path, as this path does not present any wave-matter interaction.

    The contribution of reflected paths can be modeled using the virtual source theory~\cite[Chapter 1, pp.47-49]{pozar2011microwave}. This theory states that a path, between a source and a receiver, reflected on a plane $\mathcal{R}$ can be equivalently represented as a direct path originating from the source's symmetric counterpart with respect to $\mathcal{R}$, and ending at the receiver. This is illustrated in Fig.~\ref{fig:virtual_source_theory} (a). Moreover, while the virtual source theory is classically used to model reflections, it can also accommodate the modeling of diffraction~\cite{Deliang2014}. In this scenario, the virtual source is located at the endpoint of the diffraction object $\mathcal{D}$, as illustrated in Fig.~\ref{fig:virtual_source_theory} (b).

    \begin{figure}[t]
        \centering
        \includegraphics[width=.8\columnwidth]{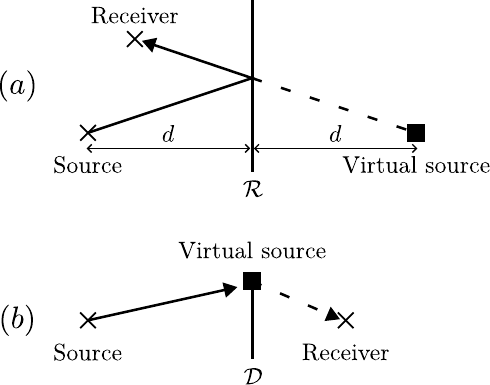}
        \caption{Virtual source theory used to model: (a) reflection, (b) diffraction.}
        \label{fig:virtual_source_theory}
    \end{figure}

    Using this theory to model the propagation distance $d_l\left(\mathbf{x}\right)$ yields:
    \begin{equation}
        h_{j,k}\left(\mathbf{x}\right) = \sum_{l=1}^{L_p} \dfrac{\gamma_l}{\norm{\mathbf{x}-\mathbf{a}_{l,j}}{2}} \mathrm{e}^{-\mathrm{j}\frac{2\pi}{\lambda_k}\norm{\mathbf{x}-\mathbf{a}_{l,j}}{2}},\label{eq:virtual_sources_model}
    \end{equation}
    where $\mathbf{a}_{l,j} \in \mathbb{R}^3$ denotes the location of the $l$th virtual antenna, so that $1/\norm{\mathbf{x}-\mathbf{a}_{l,j}}{2}$ represents the large scale fading of the $l$th path. 

    \begin{remark}
        Note that Eq.~\eqref{eq:virtual_sources_model} models a multipath channel, where the $L_p$ propagation paths are distinguishable paths: they can result from reflection, diffraction or scattering.
    \end{remark}

    \begin{remark}
            Note that the proposed channel model builds on the widely used ray-based channel model. This general framework subsumes specific models such as the Saleh-Valenzuela model, where path gains and ray directions follow specific statistical distributions. This general model provides greater flexibility to represent diverse channel scenarios, including indoor and outdoor environments, while also accounting for crucial physical propagation phenomena such as reflection and diffraction. This flexibility strengthens the general applicability of the proposed method.
    \end{remark}

    The exponential argument in Eq.~\eqref{eq:virtual_sources_model} reveals a high-frequency spatial dependence: a small change in the location $\mathbf{x}$ induces a significant change in the channel coefficient $h_{j,k}\left(\mathbf{x}\right)$. This arises from the wavelength dependence in the exponential argument: as the carrier frequency rises, the wavelength drops, yielding the fast variation of channel coefficients in the location space. In classical communication systems (sub-$6$ GHz), $\lambda$ is on the order of a few centimeters. 5G/6G systems also consider millimeter wavelengths.

    Let $\mathbf{H}\left(\mathbf{x}\right) \in \mathbb{C}^{N_a \times N_s}$ be the antenna-frequency channel matrix at location $\mathbf{x}$. This matrix is constructed from the concatenation of the complex frequency-channel coefficients $h_{j,k}\left(\mathbf{x}\right)$ in Eq.~\eqref{eq:virtual_sources_model}, across every antenna and frequency. The goal of this study is to learn:
    \begin{equation}
        \begin{aligned}
            f_{\boldsymbol{\theta}}\colon \mathbb{R}^3 &\longrightarrow \mathbb{C}^{N_a \times N_s}\\
            \mathbf{x} &\longrightarrow \mathbf{H}\left(\mathbf{x}\right),
        \end{aligned}
    \end{equation}
    a neural network $f$ parameterized by a set of learning parameters $\boldsymbol{\theta}$ that learns the continuous mapping between a location $\mathbf{x}$ and complex channel matrix $\mathbf{H}\left(\mathbf{x}\right)$. As mentioned in the introduction, the high-frequency spatial dependency in the channel model causes the location-to-channel mapping to be remarkably hard to learn using classical neural architectures, due to the spectral-bias issue. 
    Alternative neural architectures overcoming this issue are presented in the next section.

\section{Towards a model-based approach for the location-to-channel mapping learning problem}\label{sec:expansion}
    This section presents the proposed model-based approach, as well as the related theoretical results. Such theoretical developments are crucial for understanding the proposed model-based neural architecture presented in Section~\ref{sec:mb_archi}, as this architecture is based on Theorem~\ref{theorem:global_validity_approx}. Although the use of Taylor approximations on the propagation distance is not novel, this paper is, to the best of the authors' knowledge, the first to propose an approximated channel model that enables extrapolation around both a reference receiver location and a reference emitter antenna location, relying solely on assumptions of attenuation and phase proportionality with the propagation distance. Additionally, this section presents a discussion on the use of sparse recovery methods to address the problem formulated in Eq.~\eqref{eq:global_approx}.

    \subsection{Model-based approach}\label{subsection:mb_approach}
    In this paper, it is proposed to use knowledge from the physics-based propagation model presented in Eq.~\eqref{eq:virtual_sources_model} to derive a model-based neural architecture that would overcome the spectral-bias issue. As the high-frequency spatial dependency originates from $\norm{\mathbf{x}-\mathbf{a}_{l,j}}{2}/\lambda_k$ in Eq.~\eqref{eq:virtual_sources_model}, it is proposed to develop this term using a Taylor expansion. It will be shown that it allows to separate the high-frequency from the low-frequency spatial content. This approach, known as the plane wave approximation, is used to obtain the well-known steering vector model~\cite[Chapter 7]{Tse2005},~\cite{Lemagoarou2019}. 
    \begin{lemma}\label{lemma:taylor_approx}
        Let $\mathbf{x}_r \in \mathbb{R}^3$ be a reference location and $\mathcal{D}_{\mathbf{x}} \subset \mathbb{R}^3$ be a local validity domain such that $\forall \mathbf{x} \in \mathcal{D}_{\mathbf{x}}, \norm{\mathbf{x}-\mathbf{x}_r}{2} \leq \epsilon_{\mathbf{x}}$. Let $\mathbf{a}_{l,r} \in \mathbb{R}^3$ be a reference antenna location and $\mathcal{D}_{\mathbf{a}} \subset \mathbb{R}^3$ be local validity domain such that $\forall \mathbf{a}_{l,j} \in \mathcal{D}_{\mathbf{a}}, \norm{\mathbf{a}_{l,j}-\mathbf{a}_{l,r}}{2} \leq \epsilon_{\mathbf{a}}$. One has, $\forall \left(\mathbf{x},\mathbf{a}_{l,j}\right) \in \mathcal{D}_{\mathbf{x}} \times \mathcal{D}_{\mathbf{a}}:$ 
        \begin{align}\label{eq:norm_approx}
            \norm{\mathbf{x}-\mathbf{a}_{l,j}}{2} \simeq &\norm{\mathbf{x}_r - \mathbf{a}_{l,r}}{2} + \mathbf{u}_{l,j}\left(\mathbf{x}_r\right)^\transp\left(\mathbf{x}-\mathbf{x}_r\right)\\
            &- \mathbf{u}_{l,r}\left(\mathbf{x}_r\right)^\transp \left(\mathbf{a}_{l,j}-\mathbf{a}_{l,r}\right), \nonumber
        \end{align}
        with $\mathbf{u}_{l,j}\left(\mathbf{x}_r\right) = \left(\mathbf{x}_r - \mathbf{a}_{l,j}\right)/\norm{\mathbf{x}_r - \mathbf{a}_{l,j}}{2}$.
    \end{lemma}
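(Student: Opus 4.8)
The plan is to read Eq.~\eqref{eq:norm_approx} as a \emph{sequential} first-order Taylor expansion of the Euclidean distance $g\left(\mathbf{x},\mathbf{a}\right) \triangleq \norm{\mathbf{x}-\mathbf{a}}{2}$, first in the receiver variable $\mathbf{x}$ around $\mathbf{x}_r$, and then in the antenna variable $\mathbf{a}_{l,j}$ around $\mathbf{a}_{l,r}$. The only computation required is the gradient of the norm, which is a unit vector: $\nabla_{\mathbf{x}}\norm{\mathbf{x}-\mathbf{a}}{2} = \left(\mathbf{x}-\mathbf{a}\right)/\norm{\mathbf{x}-\mathbf{a}}{2}$ (valid whenever $\mathbf{x}\neq\mathbf{a}$), and symmetrically $\nabla_{\mathbf{a}}\norm{\mathbf{x}-\mathbf{a}}{2} = -\left(\mathbf{x}-\mathbf{a}\right)/\norm{\mathbf{x}-\mathbf{a}}{2}$. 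These are exactly the unit vectors $\mathbf{u}$ appearing in the statement, and the presence of the index $j$ on the first correction term (rather than $r$) is what signals that a two-step expansion, rather than a single joint one around $\left(\mathbf{x}_r,\mathbf{a}_{l,r}\right)$, produces the stated form.

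First I would freeze the antenna at its true position $\mathbf{a}_{l,j}$ and expand in $\mathbf{x}$ around $\mathbf{x}_r$. Since $\norm{\mathbf{x}-\mathbf{x}_r}{2}\leq\epsilon_{\mathbf{x}}$ on $\mathcal{D}_{\mathbf{x}}$, evaluating the $\mathbf{x}$-gradient at $\mathbf{x}_r$ gives
\begin{equation}
\norm{\mathbf{x}-\mathbf{a}_{l,j}}{2} = \norm{\mathbf{x}_r-\mathbf{a}_{l,j}}{2} + \mathbf{u}_{l,j}\left(\mathbf{x}_r\right)^\transp\left(\mathbf{x}-\mathbf{x}_r\right) + o\left(\epsilon_{\mathbf{x}}\right),
\end{equation}
with $\mathbf{u}_{l,j}\left(\mathbf{x}_r\right) = \left(\mathbf{x}_r - \mathbf{a}_{l,j}\right)/\norm{\mathbf{x}_r - \mathbf{a}_{l,j}}{2}$; at this stage the antenna is still at $\mathbf{a}_{l,j}$, which is why the first term carries the index $j$.

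Next I would expand the residual constant $\norm{\mathbf{x}_r-\mathbf{a}_{l,j}}{2}$ in the antenna variable around $\mathbf{a}_{l,r}$, using $\norm{\mathbf{a}_{l,j}-\mathbf{a}_{l,r}}{2}\leq\epsilon_{\mathbf{a}}$ on $\mathcal{D}_{\mathbf{a}}$. Evaluating the $\mathbf{a}$-gradient at $\mathbf{a}_{l,r}$ yields $-\mathbf{u}_{l,r}\left(\mathbf{x}_r\right)$, so that
\begin{equation}
\norm{\mathbf{x}_r-\mathbf{a}_{l,j}}{2} = \norm{\mathbf{x}_r-\mathbf{a}_{l,r}}{2} - \mathbf{u}_{l,r}\left(\mathbf{x}_r\right)^\transp\left(\mathbf{a}_{l,j}-\mathbf{a}_{l,r}\right) + o\left(\epsilon_{\mathbf{a}}\right).
\end{equation}
Substituting this into the previous display and collecting the two linear corrections reproduces Eq.~\eqref{eq:norm_approx} exactly.

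The only genuinely delicate point — the content hidden in the symbol $\simeq$ — is the error control. Each neglected remainder is a second-order Taylor term whose size is governed by the Hessian of the norm, so it scales as $\epsilon_{\mathbf{x}}^2/\norm{\mathbf{x}_r-\mathbf{a}_{l,j}}{2}$ and $\epsilon_{\mathbf{a}}^2/\norm{\mathbf{x}_r-\mathbf{a}_{l,r}}{2}$ respectively. These are negligible against the retained linear terms precisely in the far-field/plane-wave regime, where the validity radii $\epsilon_{\mathbf{x}},\epsilon_{\mathbf{a}}$ are small compared with the source-to-receiver distances. I would therefore make explicit the standing assumption $\mathbf{x}_r\neq\mathbf{a}_{l,j}$, ensuring differentiability of the norm and boundedness of its Hessian on the two validity domains, under which the remainders are genuinely $o\left(\epsilon_{\mathbf{x}}\right)$ and $o\left(\epsilon_{\mathbf{a}}\right)$, completing the argument.
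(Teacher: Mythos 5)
Your proposal is correct and follows essentially the same route as the paper's own proof: a sequential first-order Taylor expansion of $\norm{\mathbf{x}-\mathbf{a}_{l,j}}{2}$, first in $\mathbf{x}$ around $\mathbf{x}_r$ with the antenna frozen at $\mathbf{a}_{l,j}$ (which is exactly why the first correction carries the index $j$), then of the resulting term $\norm{\mathbf{x}_r-\mathbf{a}_{l,j}}{2}$ in the antenna variable around $\mathbf{a}_{l,r}$. Your added remarks on the second-order remainder and the differentiability condition $\mathbf{x}_r\neq\mathbf{a}_{l,j}$ are sound extra rigor that the paper defers to Corollary~\ref{corollary:error_approx}, not a departure from its argument.
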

    \begin{proof}
        See Appendix~\ref{apendix:taylor}.
    \end{proof}

    \begin{figure}[t]
        \centering
        \includegraphics[scale=.8]{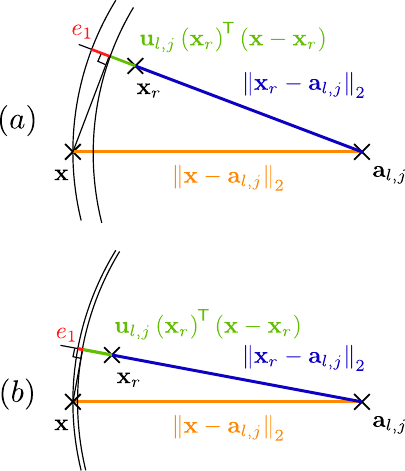}
        \caption{Taylor expansion on locations only: $(a)$ location $\mathbf{x}$ far from reference $\mathbf{x}_r$, $(b)$ location $\mathbf{x}$ close to reference $\mathbf{x}_r$.}
        \label{fig:taylor_loc_schematic}
    \end{figure}

    Fig.~\ref{fig:taylor_loc_schematic} illustrates the Taylor expansion on locations. One can remark the direct link between the approximation error and the distance to reference: when the considered location is far from the reference, the approximation error is high, while on the other hand, when the distance to the reference is small, the approximation error decreases. This emphasizes the local validity of Lemma~\ref{lemma:taylor_approx}. A similar analysis can be made for the Taylor expansion on antennas.

    Corollary~\ref{corollary:error_approx} proposes to analytically characterize the approximation error in Lemma~\ref{lemma:taylor_approx}.
    \begin{corollary}\label{corollary:error_approx}
        Let $\mathbf{x}_r \in \mathbb{R}^3$ and $\mathbf{a}_{l,r} \in \mathbb{R}^3$ be a reference location and a reference antenna location respectively. The approximation error in Eq.~\eqref{eq:norm_approx} can be approximated using the second order of the Taylor expansions as:
        \begin{align}
            e \simeq \dfrac{1}{2}&\left(\dfrac{\norm{\mathbf{x}-\mathbf{x}_r}{2}^2}{\norm{\mathbf{x}_r-\mathbf{a}_{l,j}}{2}}  - o\left(\dfrac{1}{\norm{\mathbf{x}_r-\mathbf{a}_{l,j}}{2}^2}\right)\right.\\
            &\left. + \dfrac{\norm{\mathbf{a}_{l,j}-\mathbf{a}_{l,r}}{2}^2}{\norm{\mathbf{x}_r-\mathbf{a}_{l,r}}{2}}  - o\left(\dfrac{1}{\norm{\mathbf{x}_r-\mathbf{a}_{l,r}}{2}^2}\right) \right) \nonumber
        \end{align}
    \end{corollary}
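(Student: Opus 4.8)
The plan is to read $e$ as the remainder that the two first-order Taylor expansions underlying Eq.~\eqref{eq:norm_approx} leave behind, and to evaluate it by pushing each of those expansions to second order. First I would record the differential calculus of the Euclidean norm, which is the only analytic input needed: for $\phi\left(\mathbf{z}\right) = \norm{\mathbf{z}}{2}$ one has $\nabla\phi\left(\mathbf{z}\right) = \mathbf{z}/\norm{\mathbf{z}}{2}$ and
\[
\nabla^2\phi\left(\mathbf{z}\right) = \frac{1}{\norm{\mathbf{z}}{2}}\left(\mathbf{Id}_3 - \frac{\mathbf{z}\,\mathbf{z}^\transp}{\norm{\mathbf{z}}{2}^2}\right).
\]
The Hessian is thus $1/\norm{\mathbf{z}}{2}$ times the orthogonal projector onto the hyperplane normal to $\mathbf{z}$, which is the structural fact that governs the whole computation.

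Next I would retrace how Eq.~\eqref{eq:norm_approx} is assembled, namely through two successive first-order expansions. Expanding $\mathbf{x}\mapsto\norm{\mathbf{x}-\mathbf{a}_{l,j}}{2}$ about $\mathbf{x}_r$ produces the constant $\norm{\mathbf{x}_r-\mathbf{a}_{l,j}}{2}$, the linear term $\mathbf{u}_{l,j}\left(\mathbf{x}_r\right)^\transp\left(\mathbf{x}-\mathbf{x}_r\right)$, and a quadratic remainder; expanding $\mathbf{a}_{l,j}\mapsto\norm{\mathbf{x}_r-\mathbf{a}_{l,j}}{2}$ about $\mathbf{a}_{l,r}$ then turns that constant into $\norm{\mathbf{x}_r-\mathbf{a}_{l,r}}{2}-\mathbf{u}_{l,r}\left(\mathbf{x}_r\right)^\transp\left(\mathbf{a}_{l,j}-\mathbf{a}_{l,r}\right)$ plus its own quadratic remainder. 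Because the two expansions act on disjoint variables and the linear coefficient $\mathbf{u}_{l,j}\left(\mathbf{x}_r\right)$ is retained exactly in $\mathbf{a}_{l,j}$, there is no mixed second-order contribution and the two remainders simply add, so $e$ equals their sum up to third order. This additivity is exactly why the claimed expression carries two separate terms, one scaled by $\norm{\mathbf{x}_r-\mathbf{a}_{l,j}}{2}$ and the other by $\norm{\mathbf{x}_r-\mathbf{a}_{l,r}}{2}$.

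I would then evaluate each remainder with Taylor's second-order formula, inserting the Hessian above at the base point and using the identity $\mathbf{d}^\transp\left(\mathbf{Id}_3-\mathbf{u}\mathbf{u}^\transp\right)\mathbf{d} = \norm{\mathbf{d}}{2}^2-\left(\mathbf{u}^\transp\mathbf{d}\right)^2$. For the location expansion, with $r_0 = \norm{\mathbf{x}_r-\mathbf{a}_{l,j}}{2}$ and $\mathbf{d}_x = \mathbf{x}-\mathbf{x}_r$, this gives
\[
\tfrac{1}{2}\,\mathbf{d}_x^\transp \nabla^2\phi\,\mathbf{d}_x = \frac{1}{2\,r_0}\left(\norm{\mathbf{x}-\mathbf{x}_r}{2}^2 - \left(\mathbf{u}_{l,j}\left(\mathbf{x}_r\right)^\transp\mathbf{d}_x\right)^2\right),
\]
and symmetrically for the antenna expansion with $\norm{\mathbf{x}_r-\mathbf{a}_{l,r}}{2}$ and $\mathbf{a}_{l,j}-\mathbf{a}_{l,r}$. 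Summing the two contributions and reading the squared-projection pieces $\left(\mathbf{u}_{l,j}^\transp\mathbf{d}_x\right)^2/r_0$ and $\left(\mathbf{u}_{l,r}^\transp\left(\mathbf{a}_{l,j}-\mathbf{a}_{l,r}\right)\right)^2/\norm{\mathbf{x}_r-\mathbf{a}_{l,r}}{2}$ as the small remainder terms reproduces the stated formula.

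The main obstacle is not the calculus but the bookkeeping of the $o\left(\cdot\right)$ terms. One must, on the one hand, truncate the genuine third- and higher-order contributions of the norm expansion, and on the other hand identify the retained quadratic corrections $\left(\mathbf{u}^\transp\mathbf{d}\right)^2/\norm{\cdot}{2}$ with the $o\left(\cdot\right)$ remainders. I expect this to be the delicate point, because a naive order count shows that each projection correction shares the same $1/\norm{\cdot}{2}$ scaling as the kept leading term $\norm{\mathbf{d}}{2}^2/\norm{\cdot}{2}$; justifying the $o\left(1/\norm{\cdot}{2}^2\right)$ reading therefore requires arguing, in the relevant geometry where the source separations $r_0$ and $\norm{\mathbf{x}_r-\mathbf{a}_{l,r}}{2}$ dominate the local displacements, that these projected corrections are subdominant relative to the retained quadratic terms.
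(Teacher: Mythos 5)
Your proposal follows essentially the same route as the paper's proof: both evaluate the second-order Taylor remainder of each of the two first-order expansions (in $\mathbf{x}$ about $\mathbf{x}_r$, and in $\mathbf{a}_{l,j}$ about $\mathbf{a}_{l,r}$) via the Hessian of the Euclidean norm, i.e.\ a scaled orthogonal projector, and then sum the two contributions while identifying the squared-projection pieces with the $o\left(1/\norm{\cdot}{2}^2\right)$ terms. Your closing caveat about the legitimacy of that $o\left(\cdot\right)$ identification is exactly the point the paper also hedges on, by requiring the projected quadratic terms divided by the reference distances to tend to zero.
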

    \begin{proof}
        See Appendix~\ref{appendix:cor_1_proof}.
    \end{proof}

    One can remark in Corollary~\ref{corollary:error_approx} that the approximation error is mostly dependent on the ratios $\norm{\mathbf{x}-\mathbf{x}_r}{2}^2/\norm{\mathbf{x}_r-\mathbf{a}_{l,j}}{2}$ and $\norm{\mathbf{a}_{l,j}-\mathbf{a}_{l,r}}{2}^2/\norm{\mathbf{x}_r-\mathbf{a}_{l,r}}{2}$. In other words, when the distance between the reference location and current/reference antenna is important, the approximation error decreases. Indeed, in such scenario, the common far field approximation is valid, making the planar approximation of spherical wavefronts realistic. In Eq.~\eqref{eq:norm_approx}, the planar wavefront term is represented by the first line projection term: this term is null when the location $\left(\mathbf{x}-\mathbf{x}_r\right)$ is orthogonal to the direction $\mathbf{u}_{l,j}\left(\mathbf{x}_r\right)$. Once introduced in a complex exponential, it yields periodic parallel level sets in the $\mathbf{u}_{l,j}\left(\mathbf{x}_r\right)$ direction. Such phenomenon constitutes the definition of planar wavefronts. On the other hand, when the reference distances are small, the far field approximation does not hold, making the approximation error being directly dependent on the distances to the references $\norm{\mathbf{x}-\mathbf{x}_r}{2}$ and $\norm{\mathbf{a}_{l,j}-\mathbf{a}_{l,r}}{2}$.

    Proposition~\ref{proposition:channel_approx} presents the injection of the approximated propagation distance in the channel model in Eq.~\eqref{eq:virtual_sources_model}. This allows to obtain an approximated channel coefficient around a reference location and a reference antenna.

    \begin{mdframed}
        \begin{proposition}\label{proposition:channel_approx}
        Let $\mathbf{x}_r \in \mathbb{R}^3$ and $\mathbf{a}_{l,r} \in \mathbb{R}^3$ be a reference location and reference antenna location. Let $d_{l,r} \triangleq \norm{\mathbf{x}_r-\mathbf{a}_{l,r}}{2}$, $\tau_{l,r} \triangleq d_{l,r}/c$, and $h_{l,r}\left(\mathbf{x}_r\right) \triangleq \mathrm{e}^{-\mathrm{j}\frac{2\pi}{\lambda_r}d_{l,r}}/d_{l,r}$. Let $f_r \in \mathbb{R}$ be a reference frequency such that $\forall f_k \in \mathbb{R}, f_k = \left(f_k - f_r\right)+f_r$. One has, $\forall \left(\mathbf{x},\mathbf{a}_{l,j}\right) \in \mathcal{D}_{\mathbf{x}} \times \mathcal{D}_{\mathbf{a}}:$
        \begin{align}\label{eq:chan_approx}
            h_{j,k}\left(\mathbf{x}\right) \simeq \sum_{l=1}^{L_p} &\underbrace{\gamma_l h_{l,r}\left(\mathbf{x}_r\right)}_{\normalfont \text{Reference channel}} \underbrace{\vphantom{\gamma_l h_{l,r}\left(\mathbf{x}_r\right)} \mathrm{e}^{-\mathrm{j}\frac{2\pi}{\lambda_{r}}\mathbf{u}_{l,r}\left(\mathbf{x}_r\right)^\transp\left(\mathbf{x}-\mathbf{x}_r\right)}}_{\normalfont \text{Location correction}}\\
            &\cdot \underbrace{\mathrm{e}^{-\mathrm{j}2\pi \left(f_k-f_r\right)\tau_{l,r}}}_{\normalfont \text{Frequency correction}} \underbrace{\mathrm{e}^{\mathrm{j}\frac{2\pi}{\lambda_{r}}\mathbf{u}_{l,r}\left(\mathbf{x}_r\right)^\transp\left(\mathbf{a}_{l,j}-\mathbf{a}_{l,r}\right)}}_{\normalfont \text{Antenna correction}}.\nonumber
        \end{align}
    \end{proposition}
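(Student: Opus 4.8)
The plan is to substitute the norm approximation of Lemma~\ref{lemma:taylor_approx} into the exact channel model of Eq.~\eqref{eq:virtual_sources_model} and then to separate the resulting per-path phase into a reference term carrying the high-frequency content and three slowly-varying correction factors. First I would treat the amplitude and the phase of each path separately. For the large-scale fading $1/\norm{\mathbf{x}-\mathbf{a}_{l,j}}{2}$, I would replace the exact distance by its reference value $d_{l,r}=\norm{\mathbf{x}_r-\mathbf{a}_{l,r}}{2}$; unlike the phase, this amplitude is not amplified by the large factor $2\pi/\lambda_k$, so its slow first-order variation over $\mathcal{D}_{\mathbf{x}}\times\mathcal{D}_{\mathbf{a}}$ can be absorbed into the reference coefficient without affecting the high-frequency behaviour, in line with the low/high-frequency separation motivating the whole expansion.

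The core of the argument is the phase. Substituting Eq.~\eqref{eq:norm_approx} into $\mathrm{e}^{-\mathrm{j}\frac{2\pi}{\lambda_k}\norm{\mathbf{x}-\mathbf{a}_{l,j}}{2}}$ and writing $1/\lambda_k=f_k/c=\left[(f_k-f_r)+f_r\right]/c$, I would expand the exponent into four pieces: (i) the reference term $-\frac{2\pi}{\lambda_r}d_{l,r}$, (ii) a term $-2\pi(f_k-f_r)\tau_{l,r}$ produced by applying the frequency split to $d_{l,r}$ and using $\tau_{l,r}=d_{l,r}/c$, (iii) the location term $-\frac{2\pi}{\lambda_k}\mathbf{u}_{l,j}(\mathbf{x}_r)^{\transp}(\mathbf{x}-\mathbf{x}_r)$, and (iv) the antenna term $+\frac{2\pi}{\lambda_k}\mathbf{u}_{l,r}(\mathbf{x}_r)^{\transp}(\mathbf{a}_{l,j}-\mathbf{a}_{l,r})$. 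Piece (i) combines with the reference amplitude $1/d_{l,r}$ to form exactly $h_{l,r}(\mathbf{x}_r)$, and piece (ii) is already the frequency-correction factor.

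For the two spatial corrections (iii) and (iv) I would invoke the two local-domain hypotheses of Lemma~\ref{lemma:taylor_approx} to make a pair of first-order simplifications. First, in both corrections I would replace $1/\lambda_k$ by $1/\lambda_r$: the discarded cross terms are proportional to $(f_k-f_r)\epsilon_{\mathbf{x}}$ and $(f_k-f_r)\epsilon_{\mathbf{a}}$, i.e. products of a frequency offset with a spatial offset, hence of the same higher order as the terms already neglected in Eq.~\eqref{eq:norm_approx}. Second, in the location correction (iii) I would replace the actual-antenna direction $\mathbf{u}_{l,j}(\mathbf{x}_r)$ by the reference direction $\mathbf{u}_{l,r}(\mathbf{x}_r)$, justified since $\norm{\mathbf{a}_{l,j}-\mathbf{a}_{l,r}}{2}\le\epsilon_{\mathbf{a}}\ll d_{l,r}$ gives $\mathbf{u}_{l,j}(\mathbf{x}_r)-\mathbf{u}_{l,r}(\mathbf{x}_r)=O(\epsilon_{\mathbf{a}}/d_{l,r})$, so the induced phase error is again a second-order product of order $\epsilon_{\mathbf{a}}\epsilon_{\mathbf{x}}$. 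Collecting the four exponentials, factoring $\gamma_l$, and summing over the $L_p$ paths then yields Eq.~\eqref{eq:chan_approx} term by term.

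The main obstacle is bookkeeping rather than any hard estimate: one must apply the exact frequency split only to the reference distance $d_{l,r}$ (where it produces the genuine frequency-correction factor) while tolerating $\lambda_k\simeq\lambda_r$ inside the already-small location and antenna corrections, and must argue that both the $\lambda_k\to\lambda_r$ and $\mathbf{u}_{l,j}\to\mathbf{u}_{l,r}$ replacements introduce only second-order errors of the same order as those neglected in Lemma~\ref{lemma:taylor_approx}. Making the $\simeq$ symbol precise amounts to checking that every discarded contribution is $o$ of the retained terms on $\mathcal{D}_{\mathbf{x}}\times\mathcal{D}_{\mathbf{a}}$, consistently with the error quantified in Corollary~\ref{corollary:error_approx}.
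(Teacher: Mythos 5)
Your proposal is correct and takes essentially the same approach as the paper's proof: both substitute Lemma~\ref{lemma:taylor_approx} into Eq.~\eqref{eq:virtual_sources_model}, use the split $f_k=\left(f_k-f_r\right)+f_r$, discard the cross terms coupling the frequency offset with the spatial offsets $\boldsymbol{\delta}_{\mathbf{x}},\boldsymbol{\delta}_{\mathbf{a}}$, invoke the DoD equality $\mathbf{u}_{l,j}\left(\mathbf{x}_r\right)\simeq\mathbf{u}_{l,r}\left(\mathbf{x}_r\right)$, and absorb the slowly varying amplitude into $h_{l,r}\left(\mathbf{x}_r\right)$. The only difference is bookkeeping order (the paper splits the wavelength on the exact distance before Taylor-expanding, whereas you expand first and then split only the reference-distance term), which is algebraically equivalent.
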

    \end{mdframed}
    \begin{proof}
        See Appendix~\ref{appendix:c}.
    \end{proof}

    It can be seen in Eq.~\eqref{eq:chan_approx} that $h_{j,k}\left(\mathbf{x}\right)$ is a sum of reference channels around a reference location at a reference frequency multiplied by location, frequency, and antenna correction terms. Letting $w_{l,r}\left(\mathbf{x}\right) \triangleq \gamma_l h_{l,r}\left(\mathbf{x}_r\right)\mathrm{e}^{\mathrm{j}\frac{2\pi}{\lambda_r}\mathbf{u}_{l,r}\left(\mathbf{x}_r\right)^\transp\mathbf{x}_r} \in \mathbb{C}$, Eq.~\eqref{eq:chan_approx} can be further rearranged as, $\forall \left(\mathbf{x},\mathbf{a}_{l,j}\right) \in \mathcal{D}_{\mathbf{x}} \times \mathcal{D}_{\mathbf{a}}$:
    \begin{align}\label{eq:spectral_separation}
        h_{j,k}\left(\mathbf{x}\right) \simeq \sum_{l=1}^{L_p}& \underbrace{w_{l,r}\left(\mathbf{x}\right) \mathrm{e}^{-\mathrm{j}2\pi\left(f_k-f_r\right) \tau_{l,r}} \mathrm{e}^{\mathrm{j}\frac{2\pi}{\lambda_r}\mathbf{u}_{l,r}\left(\mathbf{x}_r\right)^\transp \left(\mathbf{a}_{l,j}-\mathbf{a}_{l,r}\right)}}_{\normalfont \text{Slowly varying}} \nonumber\\
        &\cdot \underbrace{\mathrm{e}^{-\mathrm{j}\frac{2\pi}{\lambda_r}\mathbf{u}_{l,r}\left(\mathbf{x}_r\right)^\transp\mathbf{x}}}_{\normalfont \text{Fastly varying}}. 
    \end{align}

    \begin{remark}
        One can observe in Eq.~\eqref{eq:spectral_separation} that the channel coefficient $h_{j,k}\left(\mathbf{x}\right)$ can be viewed as a linear combination of planar wavefronts. The directions of said planar wavefronts are defined by the spatial frequencies $\mathbf{u}_{l,r}\left(\mathbf{x}_r\right)$. Furthermore, Eq.~\eqref{eq:spectral_separation} presents a spectral separation stage: the planar wavefronts present high-frequency spatial content because of their wavelength dependency in the exponential argument, while the weights, multiplied by the location and antenna-correction terms, present low-frequency spatial content (order of the Taylor-expansions validity domain).
    \end{remark}

    Proposition~\ref{proposition:matrix_approximation} presents the expansion of the obtained approximation to every antenna and frequency, i.e. expresses the channel matrix $\mathbf{H}\left(\mathbf{x}\right)$ as a function of the location $\mathbf{x}$.
    \begin{mdframed}
    \begin{proposition}\label{proposition:matrix_approximation}
        Let $\boldsymbol{\psi}_{\mathbf{a},l}\left(\mathbf{x}\right) \in \mathbb{C}^{N_a}$ be a steering vector (SV) and $\boldsymbol{\psi}_{\mathbf{f},l}\left(\mathbf{x}\right) \in \mathbb{C}^{N_s}$ be a frequency response vector (FRV). Let $\psi_{\mathbf{x},l}\left(\mathbf{x}\right) \triangleq \mathrm{e}^{-\mathrm{j}\frac{2\pi}{\lambda_{r}}\mathbf{u}_{l,r}\left(\mathbf{x}_r\right)^\transp \mathbf{x}} \in \mathbb{C}$ be a planar wavefront. The multi-antenna multi-frequency channel can be expressed as, $\forall \left(\mathbf{x},\mathbf{a}_{l,j}\right) \in \mathcal{D}_{\mathbf{x}} \times \mathcal{D}_{\mathbf{a}}:$
        \begin{align}\label{eq:matrix_chan_approx}
            \mathbf{H}\left(\mathbf{x}\right) \simeq \sum_{l=1}^{L_p}& w_{l,r}\left(\mathbf{x}\right) \psi_{\mathbf{x},l}\left(\mathbf{x}\right) \boldsymbol{\psi}_{\mathbf{a},l}\left(\mathbf{x}_r\right) \boldsymbol{\psi}_{\mathbf{f},l}\left(\mathbf{x}_r\right)^\transp,
        \end{align}
        where
        \begin{equation}\label{eq:ant_correction}
            \boldsymbol{\psi}_{\mathbf{a},l}\left(\mathbf{x}_r\right) = \begin{bmatrix}
                \mathrm{e}^{\mathrm{j}\frac{2\pi}{\lambda_{r}}\mathbf{u}_{l,r}\left(\mathbf{x}_r\right)^\transp\left(\mathbf{a}_{l,1}-\mathbf{a}_{l,r}\right)}\\
                \vdots\\
                \mathrm{e}^{\mathrm{j}\frac{2\pi}{\lambda_{r}}\mathbf{u}_{l,r}\left(\mathbf{x}_r\right)^\transp\left(\mathbf{a}_{l,N_a}-\mathbf{a}_{l,r}\right)}
            \end{bmatrix} \in \mathbb{C}^{N_a},
        \end{equation}
        and
        \begin{equation}\label{eq:freq_correction}
            \boldsymbol{\psi}_{\mathbf{f},l}\left(\mathbf{x}_r\right) = \begin{bmatrix}
                \mathrm{e}^{-\mathrm{j}2\pi \left(f_1-f_r\right)\tau_{l,r}}\\
                \vdots\\
                \mathrm{e}^{-\mathrm{j}2\pi \left(f_{N_s}-f_r\right)\tau_{l,r}}
            \end{bmatrix}\in \mathbb{C}^{N_s}.
        \end{equation}
    \end{proposition}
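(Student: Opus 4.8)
The plan is to establish this matrix identity by an entrywise verification that reduces everything back to the scalar approximation in Eq.~\eqref{eq:spectral_separation}. Since the right-hand side of Eq.~\eqref{eq:matrix_chan_approx} is a sum of rank-one outer products, I would first isolate a single summand $w_{l,r}\left(\mathbf{x}\right)\psi_{\mathbf{x},l}\left(\mathbf{x}\right)\boldsymbol{\psi}_{\mathbf{a},l}\left(\mathbf{x}_r\right)\boldsymbol{\psi}_{\mathbf{f},l}\left(\mathbf{x}_r\right)^\transp$ and compute its $(j,k)$ entry, then sum over $l$ and compare with $[\mathbf{H}\left(\mathbf{x}\right)]_{j,k}=h_{j,k}\left(\mathbf{x}\right)$.

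First I would observe that $w_{l,r}\left(\mathbf{x}\right)$ and $\psi_{\mathbf{x},l}\left(\mathbf{x}\right)$ are scalars depending neither on the antenna index $j$ nor on the frequency index $k$, so they factor out of any entry. By definition, the $(j,k)$ entry of the outer product $\boldsymbol{\psi}_{\mathbf{a},l}\left(\mathbf{x}_r\right)\boldsymbol{\psi}_{\mathbf{f},l}\left(\mathbf{x}_r\right)^\transp$ is the product of the $j$-th component of $\boldsymbol{\psi}_{\mathbf{a},l}\left(\mathbf{x}_r\right)$ from Eq.~\eqref{eq:ant_correction} with the $k$-th component of $\boldsymbol{\psi}_{\mathbf{f},l}\left(\mathbf{x}_r\right)$ from Eq.~\eqref{eq:freq_correction}, namely $\mathrm{e}^{\mathrm{j}\frac{2\pi}{\lambda_r}\mathbf{u}_{l,r}\left(\mathbf{x}_r\right)^\transp\left(\mathbf{a}_{l,j}-\mathbf{a}_{l,r}\right)}\mathrm{e}^{-\mathrm{j}2\pi\left(f_k-f_r\right)\tau_{l,r}}$.

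I would then assemble the full scalar product and match it term-by-term with the $l$-th summand of Eq.~\eqref{eq:spectral_separation}: the factor $\psi_{\mathbf{x},l}\left(\mathbf{x}\right)=\mathrm{e}^{-\mathrm{j}\frac{2\pi}{\lambda_r}\mathbf{u}_{l,r}\left(\mathbf{x}_r\right)^\transp\mathbf{x}}$ reproduces the fast-varying planar wavefront, the antenna component reproduces the antenna-correction exponential, and the frequency component reproduces the frequency-correction exponential, while $w_{l,r}\left(\mathbf{x}\right)$ carries the reference-channel weight. Summing over $l$ shows that the $(j,k)$ entry of the right-hand side of Eq.~\eqref{eq:matrix_chan_approx} equals $h_{j,k}\left(\mathbf{x}\right)$ as given by Eq.~\eqref{eq:spectral_separation}; since this holds for every $j\in\{1,\dots,N_a\}$ and $k\in\{1,\dots,N_s\}$, the matrix identity follows.

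There is no genuine analytical obstacle here: the statement is a pure reindexing of the already-established scalar approximation, in which the index-dependent exponentials separate cleanly into an antenna ($j$-dependent) direction and a frequency ($k$-dependent) direction, which is exactly the structure encoded by the rank-one factorization. The only point requiring mild care is the bookkeeping of the exponent signs and ensuring that the location-, antenna-, and frequency-dependent terms are attributed to the correct vectors, so that $\boldsymbol{\psi}_{\mathbf{a},l}\left(\mathbf{x}_r\right)\boldsymbol{\psi}_{\mathbf{f},l}\left(\mathbf{x}_r\right)^\transp$ reproduces precisely the cross term of Eq.~\eqref{eq:spectral_separation}.
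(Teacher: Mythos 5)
Your proof is correct and takes the same route as the paper, which simply states that Eq.~\eqref{eq:matrix_chan_approx} is a direct derivation from Eq.~\eqref{eq:spectral_separation} by defining the SVs and FRVs. Your entrywise verification of the rank-one outer products is exactly that reindexing argument, spelled out in full detail.
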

    \end{mdframed}
    \begin{proof}
        Direct derivation from Eq.~\eqref{eq:spectral_separation} by defining SVs and FRVs.
    \end{proof}
    The previously obtained results can be summarized as follows:
    \begin{itemize}
        \item Eq.~\eqref{eq:norm_approx} presents the Taylor expansion of the propagation distance around a reference location and reference antenna location.
        \item Eq.~\eqref{eq:spectral_separation} presents how the Taylor expansions introduce a spectral separation stage in the propagation channel.
        \item Eq.~\eqref{eq:matrix_chan_approx} presents the expansion of the previous results over every antenna and frequencies.
    \end{itemize}

    One can remark that the approximation proposed in Eq.~\eqref{eq:matrix_chan_approx} is only valid in local neighborhoods of the reference antenna location $\mathbf{a}_{l,r}$ and reference location $\mathbf{x}_r$, namely $\mathcal{D}_{\mathbf{a}}$ and $\mathcal{D}_{\mathbf{x}}$. Theorem~\ref{theorem:global_validity_approx} presents the expansion of the approximation validity domain to the entire $\mathbb{R}^3$ space. The idea is to partition $\mathbb{R}^3$ into location and antenna local validity domains, and then aggregate the needed planar wavefronts, SVs, and FRVs into dictionaries for each location and antenna local validity domain pair. Additionally, proof of Theorem~\ref{theorem:global_validity_approx} introduces the Direction of Departure (DoD) $\tilde{\mathbf{u}}_i$ so that the SV dictionary is constructed with only the physical antenna locations $\mathbf{a}_{1,j}$.
    \begin{mdframed}
    \begin{theorem}\label{theorem:global_validity_approx}
        Let us consider the tiling of the location subset $\mathcal{S}_{\mathbf{x}} \subset \mathbb{R}^3$ into $\Omega_{\mathbf{x}}$ local validity domains. A second tiling is applied for the antenna subset $\mathcal{S}_{\mathbf{a}} \subset \mathbb{R}^3$ with $\Omega_{\mathbf{a}}$ local validity domains. Let $D \in \mathbb{N}^* \text{ st. } D \leq L_p \Omega_{\mathbf{x}} \Omega_{\mathbf{a}}$. Let $\tilde{\mathbf{\Psi}}_{\mathbf{a}} \in \mathbb{C}^{N_a \times D}$ be a dictionary of SVs, defined as:
        \begin{equation}\label{eq:sv_dict_def}
            \tilde{\mathbf{\Psi}}_{\mathbf{a}} = \left\{\tilde{\boldsymbol{\psi}}_{\mathbf{a},i}\right\}_{i=1}^D = \left\{ \begin{bmatrix}
                \mathrm{e}^{\mathrm{j}\frac{2\pi}{\lambda_{r}}\tilde{\mathbf{u}}_{i}^\transp\left(\mathbf{a}_{1,1}-\mathbf{a}_{1,r}\right)}\\
                \vdots\\
                \mathrm{e}^{\mathrm{j}\frac{2\pi}{\lambda_{r}}\tilde{\mathbf{u}}_{i}^\transp\left(\mathbf{a}_{1,N_a}-\mathbf{a}_{1,r}\right)}
            \end{bmatrix} \right\}_{i=1}^D,
        \end{equation}
        where $\tilde{\mathbf{u}}_{i} \in \mathbb{R}^3$ is a DoD. Let $\tilde{\mathbf{\Psi}}_{\mathbf{f}} \in \mathbb{C}^{N_s \times D}$ be a dictionary of FRVs, defined as:
        \begin{equation}\label{eq:frv_dict_def}
            \tilde{\mathbf{\Psi}}_{\mathbf{f}} = \left\{\tilde{\boldsymbol{\psi}}_{\mathbf{f},i}\right\}_{i=1}^D = \left\{\begin{bmatrix}
                \mathrm{e}^{-\mathrm{j}2\pi \left(f_1-f_r\right)\tau_{i}}\\
                \vdots\\
                \mathrm{e}^{-\mathrm{j}2\pi \left(f_{N_s}-f_r\right)\tau_{i}}
            \end{bmatrix}\right\}_{i=1}^D,
        \end{equation} 
        where $\tau_i \in \mathbb{R}^{+}$ is a propagation delay. Let $\tilde{\boldsymbol{\psi}}_{\mathbf{x}}\left(\mathbf{x}\right)\in \mathbb{C}^D$ be a dictionary of planar wavefronts, defined as:
        \begin{equation}\label{eq:plan_wave_dict}
            \tilde{\boldsymbol{\psi}}_{\mathbf{x}}\left(\mathbf{x}\right) =  \left\{\tilde{\psi}_{\mathbf{x},i}\right\}_{i=1}^D = \left\{\mathrm{e}^{-\mathrm{j}\frac{2\pi}{\lambda_r}\mathbf{u}_i^\transp \mathbf{x}}\right\}_{i=1}^D,
        \end{equation}
        where $\mathbf{u}_i \in \mathbb{R}^3$ a spatial frequency. Finally,
        let $\mathbf{w}\left(\mathbf{x}\right) \in \mathbb{C}^D$ be an activation vector such that: $\boldsymbol{\varpi}\left(\mathbf{x}\right) = \mathbf{w}\left(\mathbf{x}\right) \odot  \tilde{\boldsymbol{\psi}}_{\mathbf{x}}\left(\mathbf{x}\right)$. Then, $\forall \mathbf{x} \in \mathbb{R}^3$:
        \begin{align}\label{eq:global_approx}
            &\mathbf{H}\left(\mathbf{x}\right)\simeq \sum_{i=1}^{D} \varpi_i\left(\mathbf{x}\right) \tilde{\boldsymbol{\psi}}_{\mathbf{a},i} \tilde{\boldsymbol{\psi}}_{\mathbf{f},i}^\transp \\
            &\text{with} \norm{\boldsymbol{\varpi}\left(\mathbf{x}\right)}{0} = L_p, \nonumber 
        \end{align}
    \end{theorem}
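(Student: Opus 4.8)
The plan is to lift the purely local statement of Proposition~\ref{proposition:matrix_approximation}, valid on a single pair $\mathcal{D}_{\mathbf{x}} \times \mathcal{D}_{\mathbf{a}}$, to a global statement on $\mathbb{R}^3$ by a covering-and-aggregation argument, while absorbing the per-path virtual-antenna geometry into a single direction-of-departure parameter. First I would fix the tiling: write $\mathcal{S}_{\mathbf{x}} = \bigcup_{m=1}^{\Omega_{\mathbf{x}}} \mathcal{D}_{\mathbf{x}}^{(m)}$ and $\mathcal{S}_{\mathbf{a}} = \bigcup_{n=1}^{\Omega_{\mathbf{a}}} \mathcal{D}_{\mathbf{a}}^{(n)}$ into local validity domains, each equipped with a reference location $\mathbf{x}_r^{(m)}$ and a reference (virtual) antenna obtained as the image of the fixed physical reference $\mathbf{a}_{1,r}$, so that Eq.~\eqref{eq:matrix_chan_approx} holds verbatim on every tile pair. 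Any physical receiver location $\mathbf{x} \in \mathbb{R}^3$ then belongs to exactly one location tile $m(\mathbf{x})$, and each path's virtual array belongs to exactly one antenna tile; this is what makes a single global expansion possible.

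The key technical step is the reparametrization that replaces virtual antenna locations by physical ones. Invoking the virtual source theory underlying Eq.~\eqref{eq:virtual_sources_model}, the virtual array $\{\mathbf{a}_{l,j}\}_j$ for path $l$ is the image of the physical array $\{\mathbf{a}_{1,j}\}_j$ under a rigid isometry $\mathbf{R}_l$ (a composition of the reflections/diffractions generating that path), so that $\mathbf{a}_{l,j} - \mathbf{a}_{l,r} = \mathbf{R}_l(\mathbf{a}_{1,j} - \mathbf{a}_{1,r})$. Substituting this into the antenna-correction exponents of Eq.~\eqref{eq:ant_correction} and using orthogonality of $\mathbf{R}_l$ lets me rewrite $\mathbf{u}_{l,r}\left(\mathbf{x}_r\right)^\transp(\mathbf{a}_{l,j}-\mathbf{a}_{l,r}) = (\mathbf{R}_l^\transp\mathbf{u}_{l,r}\left(\mathbf{x}_r\right))^\transp(\mathbf{a}_{1,j}-\mathbf{a}_{1,r})$, which defines the effective DoD $\tilde{\mathbf{u}}_i \triangleq \mathbf{R}_l^\transp \mathbf{u}_{l,r}\left(\mathbf{x}_r\right)$ appearing in Eq.~\eqref{eq:sv_dict_def}. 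This turns each per-tile steering vector into one of the dictionary columns $\tilde{\boldsymbol{\psi}}_{\mathbf{a},i}$ built solely from the physical geometry.

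Next I would aggregate. Enumerating all triples $(l,m,n)$ of path, location tile and antenna tile yields at most $L_p\Omega_{\mathbf{x}}\Omega_{\mathbf{a}}$ elementary terms; indexing them by $i = 1, \dots, D$ with $D \leq L_p\Omega_{\mathbf{x}}\Omega_{\mathbf{a}}$, I would assemble the columns $\tilde{\boldsymbol{\psi}}_{\mathbf{a},i}$, $\tilde{\boldsymbol{\psi}}_{\mathbf{f},i}$ (with $\tau_i$ the delay of the corresponding path/tile) and the planar wavefronts $\tilde{\psi}_{\mathbf{x},i}$ of Eqs.~\eqref{eq:sv_dict_def}--\eqref{eq:plan_wave_dict}, and set the weight $w_i\left(\mathbf{x}\right)$ to the local $w_{l,r}\left(\mathbf{x}\right)$ when $\mathbf{x}$ lies in tile $m$ and to $0$ otherwise. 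Substituting Eq.~\eqref{eq:matrix_chan_approx} for the tile $m(\mathbf{x})$ and padding the remaining columns with zero weights reproduces Eq.~\eqref{eq:global_approx}.

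Finally, the sparsity count follows directly: for a fixed $\mathbf{x}$, only the unique location tile $m(\mathbf{x})$ carries nonzero weights, and within it each of the $L_p$ paths contributes exactly one term (the antenna tile its virtual array falls into), so $\mathbf{w}\left(\mathbf{x}\right)$ --- and hence $\boldsymbol{\varpi}\left(\mathbf{x}\right) = \mathbf{w}\left(\mathbf{x}\right) \odot \tilde{\boldsymbol{\psi}}_{\mathbf{x}}\left(\mathbf{x}\right)$, the wavefronts being unimodular --- has exactly $L_p$ nonzero entries, giving $\norm{\boldsymbol{\varpi}\left(\mathbf{x}\right)}{0} = L_p$. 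I expect the isometry/reparametrization step to be the main obstacle: one must argue that the virtual-source construction genuinely preserves the array shape (so that $\mathbf{R}_l$ is a single isometry common to all antennas of path $l$) and that the induced DoD is independent of the antenna index, which is precisely what allows the physical-geometry dictionary of Eq.~\eqref{eq:sv_dict_def} to be well defined.
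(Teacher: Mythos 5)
Your proposal is correct and follows essentially the same route as the paper's proof: tiling the location and antenna subsets, applying Proposition~\ref{proposition:matrix_approximation} on each tile pair, rewriting each virtual array as a rotated/translated copy of the physical array so that rotation equivariance of the projection term yields the effective DoDs $\tilde{\mathbf{u}}_i$ of Eq.~\eqref{eq:sv_dict_def}, and aggregating at most $L_p\Omega_{\mathbf{x}}\Omega_{\mathbf{a}}$ atoms into dictionaries with a per-location sparse activation vector. Your explicit remarks on indexing the triples $(l,m,n)$, zero-padding inactive tiles, and the unimodularity of the planar wavefronts preserving the support of $\boldsymbol{\varpi}\left(\mathbf{x}\right)$ are slightly more detailed than the paper's write-up but change nothing of substance.
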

    \end{mdframed}
    \begin{proof}
        See Appendix~\ref{appendix:d}.
    \end{proof}

    One can see that Theorem~\ref{theorem:global_validity_approx} proposes a sparse continuous interpolation of the channel matrix. The composite dictionary consists of all possible combinations of the required SVs and FRVs, while the activation vector presents spectral separation: it is defined with a low-frequency term $\mathbf{w}\left(\mathbf{x}\right)$, and a high-frequency term $\tilde{\boldsymbol{\psi}}_{\mathbf{x}}\left(\mathbf{x}\right)$.

    \begin{remark}
        While Theorem~\ref{theorem:global_validity_approx} extends the validity expansion of Eq.~\eqref{eq:matrix_chan_approx} to the entire $\mathbb{R}^3$ space, the number of atoms $D$ in each dictionary is intractable due to its dependence on the number of local validity domains $ \Omega_{\mathbf{x}}$ and $ \Omega_{\mathbf{a}}$. A way to overcome this issue is to construct each dictionary by discretizing its generating subspace, i.e. the DoD subspace, unit sphere $\mathcal{S}_1$, for the SV dictionary, the delay subspace $\mathbb{R}^+$ for the FRV dictionary and the spatial frequency subspace $\mathcal{S}_1$ for the planar wavefront dictionary. This approach is at the center of the model-based neural network architecture presented in Section~\ref{sec:mb_archi}.
    \end{remark}
    \subsection{Discussion: a sparse recovery approach}
    As Eq.~\eqref{eq:global_approx} can be viewed as a sparse reconstruction problem, one could think of finding the correct activation coefficients using sparse recovery techniques. Indeed, using matching pursuit (MP)/orthogonal matching pursuit (OMP) algorithms~\cite{Mallat1993, Pati1993}, one could find a sparse representation of $\mathbf{H}\left(\mathbf{x}\right)$ in a composite dictionary of FRVs/SVs. Obtaining a channel estimate $\hat{\mathbf{H}}\left(\mathbf{x}\right)$ at a wanted location $\mathbf{x}$ could be separated into two stages. Firstly, one would have to solve:
    \begin{mini}|s|
        {\mathbf{w}}{\norm{\mathbf{H}\left(\mathbf{x}_r\right)-\sum_{i=1}^{D} w_i\tilde{\psi}_{\mathbf{x},i}\left(\mathbf{x}_r\right) \tilde{\boldsymbol{\psi}}_{\mathbf{a},i} \tilde{\boldsymbol{\psi}}_{\mathbf{f},i}^\transp}{F}^2}{}{},
        \addConstraint{\norm{\mathbf{w}}{0} = L_p},
        \label{eq:sparse_prob}
    \end{mini}
    $\forall \mathbf{x}_r \in \mathcal{H}_r \subset \mathbb{R}^3$, $\mathcal{H}_r$ being a set of reference locations. Then, for the wanted location $\mathbf{x}$, one could obtain the channel estimate as:
    \begin{equation}
        \hat{\mathbf{H}}\left(\mathbf{x}\right) = \sum_{i=1}^{D} w_{\mathbf{x}^\star_{r,i}} \tilde{\psi}_{\mathbf{x},i}\left(\mathbf{x}\right) \tilde{\boldsymbol{\psi}}_{\mathbf{a},i} \tilde{\boldsymbol{\psi}}_{\mathbf{f},i}^\transp,
    \end{equation}
    where $\mathbf{w}_{\mathbf{x}^\star_r} \in \mathbb{C}$ is the activation coefficient obtained by solving Eq.~\eqref{eq:sparse_prob} for $\mathbf{x}^\star_r = \argmin_{\mathbf{x}_r \in \mathcal{H}_r} \norm{\mathbf{x}-\mathbf{x}_r}{2}$, i.e. the closest reference location to the wanted location.

    However this approach presents an intractable complexity induced by the high-frequency nature of the propagation channel and the composite dictionary size. Indeed, as the propagation channel is a function rapidly varying with the location, one would have to consider spatially close reference locations, i.e. with spacing between reference locations under the wavelength. This introduces a high cardinality in $\mathcal{H}_r$, which increases the time complexity for achieving the first step of this estimation scheme. Additionally, as the composite dictionary is composed of every pair of FRVs/SVs, it results in a prohibitive dictionary size, further increasing the time complexity of the first estimation step.

    The next section proposes to transform the obtained channel approximation in Eq.~\eqref{eq:global_approx} into a neural architecture, following the model-based machine learning paradigm.

\section{Model-based neural architecture}\label{sec:mb_archi}
    As it has been shown that classical signal processing methods are unsuited for solving the sparse reconstruction problem in Theorem~\ref{theorem:global_validity_approx}, it is proposed to leverage the learning capabilities of neural networks for the location-to-channel mapping learning. This section presents a model-based neural architecture from Eq.~\eqref{eq:global_approx}.

    The approximation in Eq.~\eqref{eq:global_approx} can be rewritten using the vectorization operator, clearly displaying the sparse representation structure of the approximated channel. $\forall \mathbf{x}\in \mathbb{R}^3$:
    \begin{align}\label{eq:global_matrix_form}
        &\text{vec}\left(\mathbf{H}\left(\mathbf{x}\right)\right)\simeq \left(\tilde{\mathbf{\Psi}}_{\mathbf{f}}\left(\mathbf{x}\right) \otimes \tilde{\mathbf{\Psi}}_{\mathbf{a}}\left(\mathbf{x}\right)  \right) \text{vec}\left(\text{diag}\left(\boldsymbol{\varpi}\left(\mathbf{x}\right)\right)\right) \\
        &\text{with} \norm{\boldsymbol{\varpi}\left(\mathbf{x}\right)}{0} = L_p, \nonumber 
    \end{align}

    As shown in Eq.~\eqref{eq:global_matrix_form}, the location-to-channel mapping learning problem can be partitioned into four subproblems: the planar wavefronts, FRV, SV dictionaries, and the complex weights coefficients learning.

    \begin{figure}[!h]
        \centering
        \includegraphics[scale=.4]{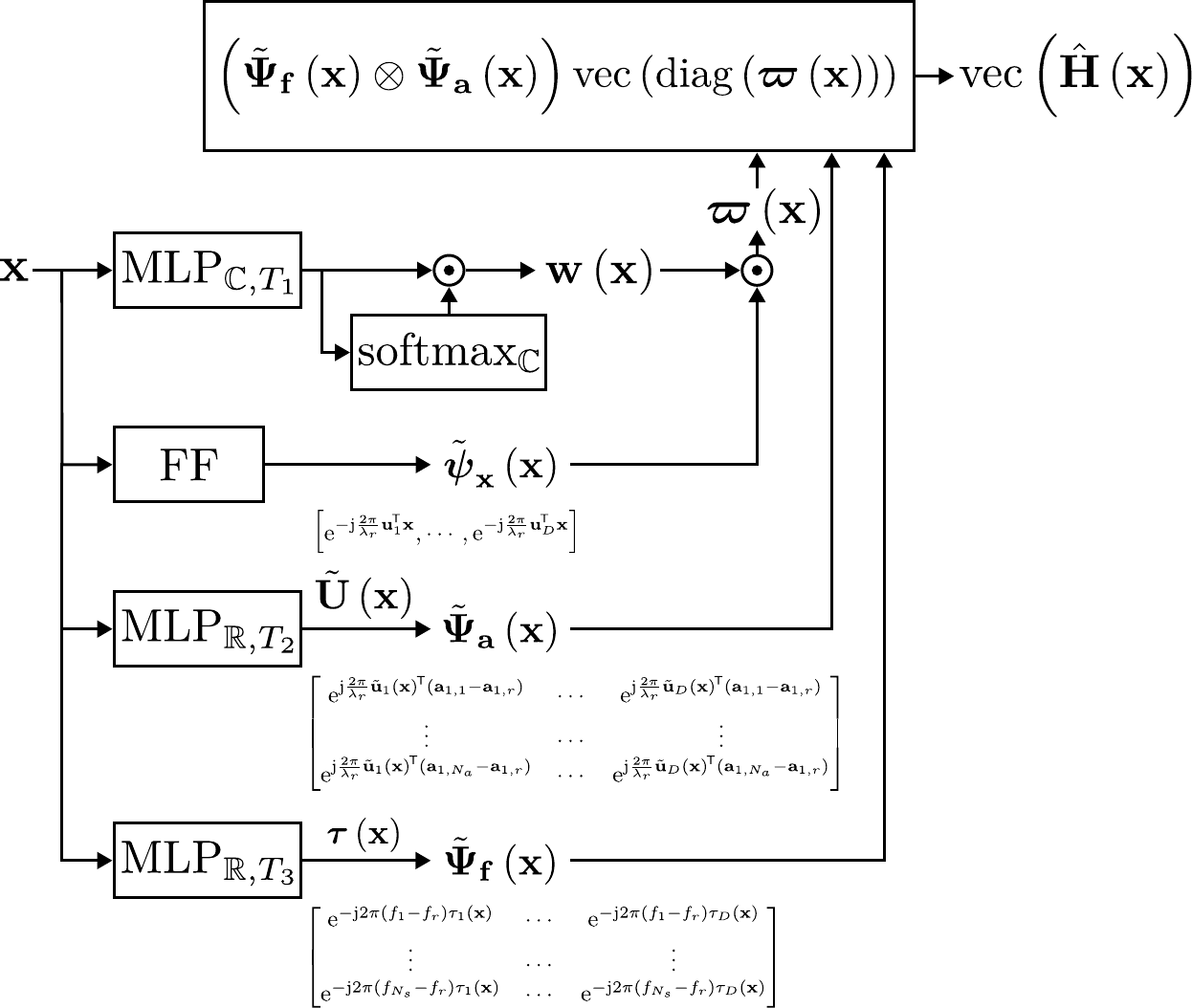}
        \caption{Proposed model-based neural architecture.}
        \label{fig:mb_neural_archi}
    \end{figure}

    \subsection{Planar wavefront dictionary}
    As exposed by the previous analysis, the planar wavefronts take the form $\tilde{\boldsymbol{\psi}}_{\mathbf{x}}\left(\mathbf{x}\right) = \{\mathrm{e}^{-\mathrm{j}\frac{2\pi}{\lambda_r}\mathbf{u}_i^\transp \mathbf{x}}\}_{i=1}^D \in \mathbb{C}^D$. Such dictionary can be constructed from the spatial frequencies $\mathbf{u}_i \in \mathbb{R}^3$, which by definition, are unit vectors representing the planar wavefronts direction. As a result, every spatial frequency belongs to a two-dimensional manifold: the unit sphere $\mathcal{S}_1$. Sampling $\mathcal{S}_1$ with $D$ points yields the spatial frequency dictionary $\mathbf{U} \triangleq \{\mathbf{u}_i\}_{i=1}^D \in \mathbb{R}^{3\times D}$ that is used to compute the planar wavefront dictionary. The dictionary $\tilde{\boldsymbol{\psi}}_{\mathbf{x}}\left(\mathbf{x}\right)$ is then constructed using the Fourier feature (FF) layer, defined as:
    \begin{equation}
        \text{FF}: \mathbf{x} \rightarrow \left[ \mathrm{e}^{-\mathrm{j}\frac{2\pi}{\lambda_r}\mathbf{u}_1^\transp \mathbf{x}}, \cdots, \mathrm{e}^{-\mathrm{j}\frac{2\pi}{\lambda_r}\mathbf{u}_D^\transp \mathbf{x}} \right].
    \end{equation}
    Note that this FF layer is just a complex implementation of the $\cos$/$\sin$ embedding layer used in RFF networks~\cite{rahimi2007,tancik2020fourfeat}. Also note that spatial frequencies could be learned either by gradient descent or directly through a neural network, however, they are kept fixed for this study.


    \subsection{FRV dictionary}
    It has been presented in Eq.~\eqref{eq:frv_dict_def} that the FRV dictionary $\tilde{\mathbf{\Psi}}_{\mathbf{f}}\left(\mathbf{x}\right) \in \mathbb{C}^{N_s \times D}$ only depends on the system frequencies $f_k$, the reference frequency $f_r$, and propagation delays $\tau_i$. The system and reference frequencies are assumed to be known: such assumption is typically made in classical communication systems. In this study, the reference frequency is computed as $f_r = \frac{1}{N_s}\sum_{k=1}^{N_s} f_k$. The FRV dictionary can then be constructed by sampling the propagation delay subspace, i.e. $\mathbb{R}^+$. In order to optimize the performance of the proposed approach, it is suggested to learn the discretization for every location. As such, it is proposed to use a MLP that learns a propagation delay vector $\boldsymbol{\tau}\left(\mathbf{x}\right) \in \mathbb{R}^D$ for every location $\mathbf{x}$. Indeed, as the propagation delay vector exhibits slow variations in the location space, the MLP will not suffer from the spectral bias.

    \subsection{SV dictionary}
    As presented in Eq.~\eqref{eq:sv_dict_def}, the SV dictionary $\tilde{\mathbf{\Psi}}_{\mathbf{a}}\left(\mathbf{x}\right) \in \mathbb{C}^{N_a \times D}$ only depends on the reference frequency wavelength $\lambda_r$, on the true antenna locations $\mathbf{a}_{1,j}$, on the true reference antenna location $\mathbf{a}_{1,r}$, and on DoDs $\tilde{\mathbf{u}}_i$. The reference frequency wavelength is assumed to be known, as well as the antenna locations. The true reference antenna location is computed as the barycenter of the true antenna locations. In order to construct the SV dictionary, one could discretize the DoD subspace $\mathcal{S}_1$. Similarly than for the FRV dictionary, it is proposed to optimize performance by learning the discretization for every location. As such, it is proposed to learn $\tilde{\mathbf{U}}\left(\mathbf{x}\right)$ using a MLP, a method referred to as MB-$\tilde{\mathbf{u}}$ learning. On the other hand, it is proposed to directly learn the entire SV dictionary for each location using a MLP, without assuming any dictionary structure. This approach, referred to as MB-$\tilde{\mathbf{\Psi}}_{\mathbf{a}}$ learning releases constraints on the antenna correction terms at the expense of an increased learning-parameter complexity.

    \subsection{Complex weights learning}\label{subsection:complex_weights}
    The complex activation weights $\mathbf{w}\left(\mathbf{x}\right) \in \mathbb{C}^D$ are learned following the same approach as in~\cite{chatelier2023modelbased}. The MLP neural network is used to learn the complex weight vector, $\mathbf{w}\left(\mathbf{x}\right) \in \mathbb{C}^D$, introducing the sparsity constraint with a \texttt{softmax} ponderation. Note that this hypernetwork encompass the learning of the path attenuation: while the channel model presented in Eq.~\eqref{eq:model_freq} assumes the attenuation being proportional to the propagation distance $d_l\left(\mathbf{x}\right)$, there are cases where attenuation is proportional to $d_l\left(\mathbf{x}\right)^\alpha$, with $\alpha \in \mathbb{R}^{+,*}$. The proposed neural model can handle such cases, as the MLP neural network used for the complex activation weights learning is capable to learn such non-linear content. Finally, this MLP can adapt the sparsity of the learned activation vector, thereby effectively accounting for potential variations in the number of propagation paths for different locations within the considered scene.

    \subsection{Global architecture}
    The proposed model-based neural network architecture for the location-to-channel mapping learning is presented in Fig.~\ref{fig:mb_neural_archi}. Note that this architecture only presents the MB-$\tilde{\mathbf{u}}$ approach. During experiments, the used approach will be explicitly specified. Also note that $\text{MLP}_{\mathbb{C},T}$ represents a 3-layer $\texttt{ReLU}_{\mathbb{C}}$-MLP~\footnote{Note that $\forall z_1 \in \mathbb{C}, \forall \mathbf{z}_2 \in \mathbb{C}^N$, $\texttt{ReLU}_{\mathbb{C}}\left(z_1\right) = \texttt{ReLU}\left(\mathfrak{Re} \left\{z_1\right\}\right) + \mathrm{j} \texttt{ReLU}\left(\mathfrak{Im} \left\{z_1\right\}\right)$ and $\texttt{softmax}_\mathbb{C}\left(\mathbf{z}_2\right) = \texttt{softmax}\left(\abs{\mathbf{z}_2}\right)$.} with complex weights and biases, where each layer width is $T$. $\text{MLP}_{\mathbb{R},T}$ represents the same MLP but with real weights/biases and $\texttt{ReLU}$ activation functions.
    
    The following remark emphasizes that the use of hypernetworks is a key feature of the proposed model.

    \begin{remark}
        The proposed architecture shares common features with classical INR networks. Indeed, the FF layer can be seen as a positional-embedding stage, where the low dimensional location information is projected on a higher dimensional space containing high spatial frequencies. Nevertheless, the proposed architecture stands out from classical INR networks with the introduction of hypernetworks~\cite{Schmidhuber1992,Ha2017}, i.e. parallel neural networks learning parameters of the main network. As illustrated in Fig.~\ref{fig:mb_neural_archi}, four hypernetworks are used to learn parameters of the main branch: one for the complex activation vector learning $\mathbf{w}\left(\mathbf{x}\right)$, and three dedicated for the planar wavefront $\tilde{\boldsymbol{\psi}}_{\mathbf{x}}\left(\mathbf{x}\right)$, FRV $\tilde{\mathbf{\Psi}}_{\mathbf{f}}\left(\mathbf{x}\right)$ and SV $\tilde{\mathbf{\Psi}}_{\mathbf{a}}\left(\mathbf{x}\right)$ dictionaries constitution. Note that the proposed architecture is a specific instance of the hypernetwork architecture family, which includes the widely popular self-attention mechanism.
    \end{remark}

\section{Experiments}\label{sec:experiments}
    In this section, the mapping-learning performance of the proposed approach are evaluated on realistic synthetic data. Additionally, its performance is compared against classical architecture from the INR literature.

    \subsection{Learning framework}
    The dataset generation can be divided into two phases: location generation and channel computation at those locations.

    \noindent\textbf{System parameters.} For all datasets, the central frequency (i.e. the reference frequency $f_r$) is set to $3.5$ GHz and the considered bandwidth is $50$ MHz. The BS is equipped with a uniform linear array (ULA) with half reference wavelength spacing. The number of considered antennas $N_a$ and frequencies $N_s$ is presented for each experiment.  

    \noindent\textbf{Location generation.} For all datasets, a $L$ by $L$ square scene, with $L=10$ m, is considered as the location space. Inside this scene, train locations are uniformly sampled with a certain location density. The uniform sampling influences the location generation process: when considering $N$ train locations, the train location matrix can be expressed as $\mathbf{X} = \left\{\mathbf{x}_i\right\}_{i=1}^{N} \in \mathbb{R}^{N \times 2}$, with:
    \begin{equation}
        \forall i \in \llbracket 1,N \rrbracket, \text{ } \mathbf{x}_i \sim \left(\mathcal{U}\left(-L/2, L/2\right), \mathcal{U}\left(-L/2,L/2\right)\right).
    \end{equation}
    The spatial density impacts the number of generated locations within the scene. The $L$ by $L$ square scene covers an area $S_m$ m$^2$, or equivalently $S_{\lambda}$ $\lambda^2$. To achieve a desired spatial density of $\gamma_m$ locs./m$^2$, or equivalently $\gamma_{\lambda}$ locs./$\lambda^2$, the following formula is applied:
    \begin{equation}
    N = S_m \gamma_m = S_{\lambda} \gamma_{\lambda}.
    \end{equation}
    Test locations are generated across a $2$D uniform grid within the scene. The spacing between each test location in both the $x$ and $y$ directions is set to $\lambda/4$. In the considered scene and at the considered reference frequency, this grid yields around $210$k test locations, enabling a thorough assessment of the mapping learning capabilities. Additionally, the spatial proximity of test locations allows to assess the learning performance of small scale channel fading.
    \begin{remark}
        Note that while all previous theoretical developments are proven in the $\mathbb{R}^3$ scenario, the following simulations of the location-to-channel mapping learning are obtained in $\mathbb{R}^2$ due to computational complexity. Hence, for all datasets, the locations are considered on the same elevation plane as the BS, i.e. considering $2$D locations and azimuth DoDs. Consequently, spatial frequencies are sampled from the $\mathcal{C}_1$ unit circle.    
    \end{remark}

    \noindent\textbf{Channel generation.} Realistic synthetic channels are generated using the Sionna ray-tracing library~\cite{sionna}. For each train/test location, ray-tracing techniques find the propagation paths and compute the channel coefficients. Three different datasets are considered:
    \begin{itemize}
        \item $\texttt{D}_1$: complex dataset with variable number of propagation paths regarding the considered location in the scene.
        \item $\texttt{D}_2$: $\texttt{D}_1$ with no LoS path.
        \item $\texttt{D}_3$: LoS dataset.
    \end{itemize}
    $\texttt{D}_2$ enables investigation of the mapping-learning capabilities of the proposed model under conditions characterized by an equitable distribution of path contributions, where no single path exhibits dominant power. As depicted in Fig.~\ref{fig:sionna_scene}, $\texttt{D}_1$ and $\texttt{D}_2$ are generated in the \textit{Etoile} Sionna scene in Paris, while $\texttt{D}_3$ is generated in an empty scene. For all datasets, diffraction and scattering are not considered. In $\texttt{D}_1$ and $\texttt{D}_2$, each propagation path can present at most two consecutive reflections.
    \begin{figure}
        \centering
        \includegraphics[scale=.25]{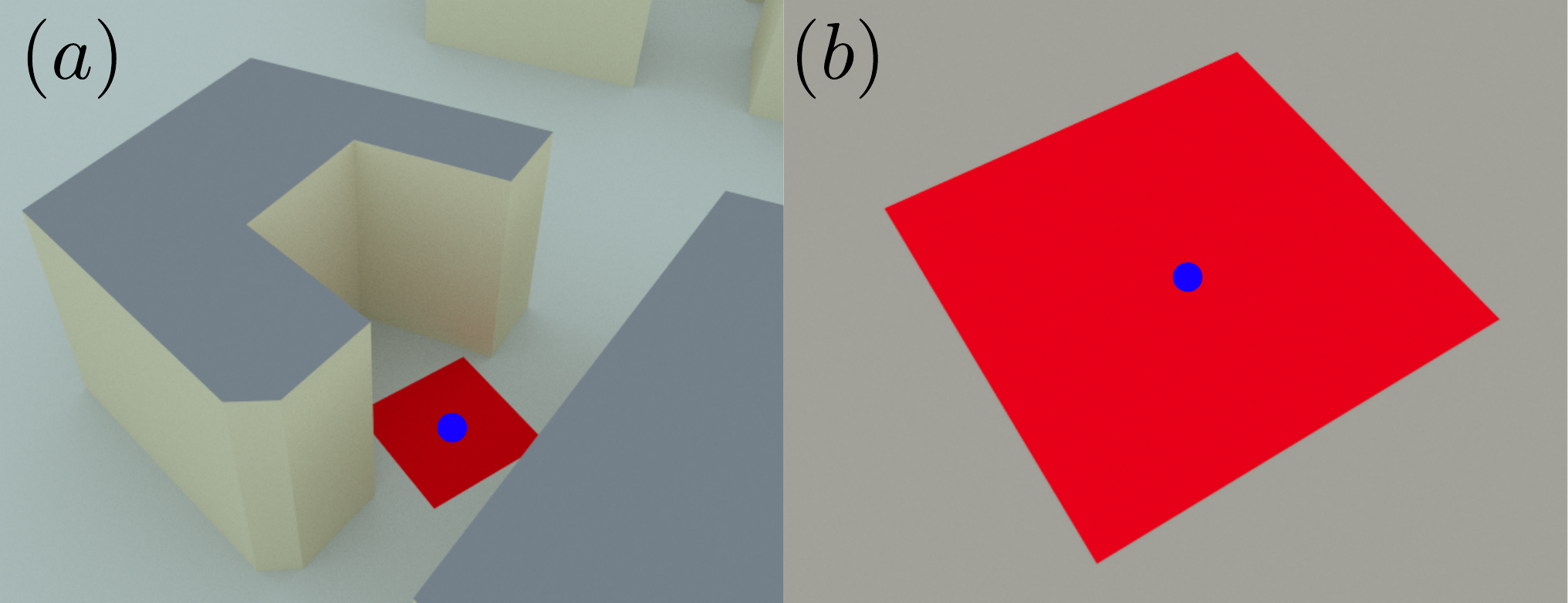}
        \caption{Ray-tracing scenes in Sionna: the red plane represents the possible train/test locations and the blue dot represents the BS. $(a)$: scene used for $\texttt{D}_1$ and $\texttt{D}_2$. $(b)$: scene used for $\texttt{D}_3$.}
        \label{fig:sionna_scene}
    \end{figure}


    \noindent\textbf{Training loss and evaluation metric.} All networks are trained using the classical $\ell_F$ loss, defined as:
    \begin{equation}\label{eq:loss}
        \mathcal{L} = \frac{1}{\abs{\mathcal{B}}}\sum_{\mathbf{x}\in\mathcal{B}} \norm{\mathbf{H}\left(\mathbf{x}\right)-f_{\boldsymbol{\theta}}\left(\mathbf{x}\right)}{F}^2,
    \end{equation}
    where $f_{\boldsymbol{\theta}}\left(\mathbf{x}\right) = \hat{\mathbf{H}}\left(\mathbf{x}\right)$ is the output of the neural network, and $\mathcal{B} \subset \mathbb{R}^2$ is the current batch set for the considered scene.
    The evaluation metric is the Normalized Mean Squared Error (NMSE) in dB over the test grid, defined as:
    \begin{equation}
        \hspace{-5pt}\text{NMSE}_{\left(\text{dB}\right)} = 10\log_{10}\left(\frac{1}{\abs{\mathcal{T}}}\sum_{\mathbf{x}\in\mathcal{T}}\dfrac{\norm{\mathbf{H}\left(\mathbf{x}\right)-f_{\boldsymbol{\theta}}\left(\mathbf{x}\right)}{F}^2}{\norm{\mathbf{H}\left(\mathbf{x}\right)}{F}^2}\right),
    \end{equation}
    where $\mathcal{T} \subset \mathbb{R}^2$ is the test location set for the considered scene.

    \begin{remark}
        While the proposed classical $\ell_F$ loss function may appear overly simplistic for this learning task, the subsequent experiments demonstrate that it achieves satisfactory learning results. A more complex loss function, which would incorporate the selection of the correct planar wavefronts/SVs/FRVs for each training location, could be envisioned. However, in contrast to the proposed $\ell_F$ loss, such approach would require a more extensive data collection process, as both the azimuth and elevation angle of arrivals would need to be gathered for each training location.
    \end{remark}

    \begin{remark}
        Given that the proposed model necessitates a training phase utilizing channel coefficients, obtained through ray-tracing in this study, a natural point of inquiry emerges regarding the rationale for not directly employing ray-tracing for channel prediction, given its applicability as an applied AI model. The primary theoretical and practical advantage of the proposed method over ray-tracing is that accurate channel prediction through ray-tracing necessitates the constitution of a digital twin, with rigorous modelization of the scene at the electromagnetic level. Specifically, this ray-tracing approach mandates the determination of each scene element's electromagnetic properties such as material permittivity or conductivity and reflection/diffractions coefficients. In contrast, the proposed method exclusively requires the collection of channel coefficients at different locations within a scene, and demonstrates independence from a priori knowledge of the scene topology. Furthermore, the ray-tracing approach exhibits a significantly superior computational complexity: predicting the channel at a given location involves shooting rays in all directions across the $\mathbb{R}^3$ space. Conversely, once trained, the proposed method exclusively involves computationally efficient matrix operations, as described in Eq.~\eqref{eq:global_matrix_form}.
    \end{remark}

    \subsection{Baselines}
    It is proposed to assess the performance of the proposed architecture against baseline models from the INR literature. As such, several baselines are proposed in Fig.\ref{fig:baselines}: 1. is a classical complex MLP while 2. and 3. are complex RFFs. Additionally, in 2., the spatial frequencies are drawn from a gaussian distribution, while 3. considers the same spatial frequencies as in the model-based architecture, i.e. sampled from the $\mathcal{C}_1$ circle. All MLPs output a vector of dimension $N_aN_s$ which is then reshaped to obtain the estimated channel matrix $\hat{\mathbf{H}}\left(\mathbf{x}\right) \in \mathbb{C}^{N_a \times N_s}$.
    \begin{figure}[h]
        \centering
        \includegraphics[scale=.55]{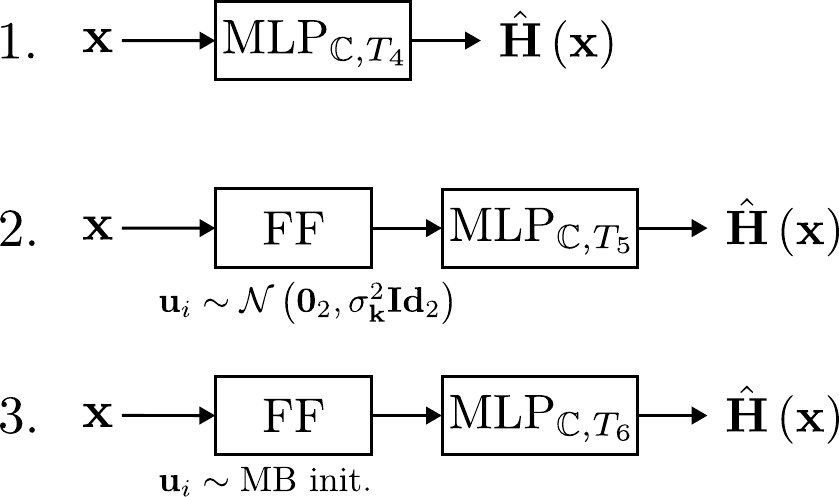}
        \caption{Baselines: 1. MLP, 2. and 3. RFFs.}
        \label{fig:baselines}
    \end{figure}

    \subsection{Experimental results}\label{subsec:experiment_results}
    For all experiments, the proposed networks and baselines have the following layer sizes: $T_1 = 256$, $T_2 = T_3 = 64$, $T_4 = 1024$, and $T_5 = T_6= 64$. Additionally, unless stated otherwise, $D=1000$ spatial frequencies are considered. Those values have been empirically chosen to maximize performance and minimize overfitting. For RFFs, increasing the number of learning parameters did not improve performance. Finally, a training location density of $175$ locs.$/$m$^2$ $\simeq 1.3$ locs.$/\lambda^2$, which is equivalent to $17.5$k training locations, is considered at the exception of the experiment depicted in Fig.~\ref{fig:NMSE_density_evolution}.

    \noindent\textbf{Scene reconstruction.} Let $N_a = 64, N_s = 64$. It is proposed to study the ability of each network to learn the location-to-channel mapping in different radio-environments. 
    
    Table~\ref{table:nmse_inference} presents the reconstruction results. One can remark that the proposed MB-$\tilde{\mathbf{\Psi}}_{\mathbf{a}}$ model outperforms every baseline and the other MB model in every propagation scene. The performance gap observed between the MB-$\tilde{\mathbf{\Psi}}_{\mathbf{a}}$ and MB-$\tilde{\mathbf{u}}$ models is due to an hypothesis made in the proof of Proposition~\ref{proposition:channel_approx} and is explained in a subsequent experiment. The bad mapping learning capabilities of the baselines can be explained by the lack of structure in their architecture.

    Fig.~\ref{fig:real_reconstruct} presents the reconstructed channel after training, for several models. The top row represents the real part of the reconstructed channels, while the bottom row represents the associated spatial spectrums (i.e. $2$D Fourier transform). Note that the vertical and horizontal lines in the bottom row are artifacts arising from the rectangular windowing of the scene. It clearly appears that the MB network efficiently learns the desired mapping. Additionally, one can remark that the RFF model presents high-frequency spatial contents due to its embedding stage, as shown on the bottom row, but fails to learn the complex structure of the propagation channel. Finally, the reconstructed channel for the MLP model presents low-frequency spatial contents: the spectral content is concentrated near the origin on the bottom row. This behavior illustrates the spectral-bias issue inherent to this architecture.

    Fig.~\ref{fig:NMSE_density_evolution} presents the mapping-learning performance evolution with respect to the training location density on the $\texttt{D}_1$ dataset. One can remark that the MB-$\tilde{\mathbf{\Psi}}_{\mathbf{a}}$ network outperforms the baseline in any density configuration. Additionally, the proposed network presents a failure mode in the low location density regime. In this regime, the scarcity of spatially close training locations results in the learning failure of the rapidly varying spatial content. Note that the proposed method achieves quasi perfect reconstruction in sub-Shannon-Nyquist location density, as the $2$D Shannon-Nyquist criterion yields a density of $4$ locs.$/\lambda^2$.

    Fig.~\ref{fig:NMSE_ant_freq} presents the ground-truth and estimated frequency/antenna responses for a given test location in the $\texttt{D}_1$ dataset. One can see that, due to the multipath nature of the $\texttt{D}_1$ scene, the frequency response presents rapid variations. Both the estimated frequency and antenna responses are close to the ground-truth highlighting the very good reconstruction performance of the proposed model network.

    \begin{table}[t]
        \caption{$\text{NMSE}$ (in dB) over the test grid.}
        \centering
            \begin{tabular}{lcccccc}
                \toprule
                &  MLP & RFF & RFF (MB init.) & \textbf{MB}-$\tilde{\mathbf{\Psi}}_{\mathbf{a}}$ & MB-$\tilde{\mathbf{u}}$\\
                \midrule
                Params. & $10.5$M & $669$k & $669$k & $\mathbf{9.1}$\textbf{M} & $851$k \\
                \toprule
                $\texttt{D}_1$ & $0.01$ & $0.02$ & $2.10^{-3}$ & $\mathbf{-29.23}$ & $-14.60$\\
                \midrule
                $\texttt{D}_2$ & $0.01$ & $0.05$ & $3.10^{-3}$ & $\mathbf{-20.19}$ & $-10.25$\\
                \midrule
                $\texttt{D}_3$ & $0.01$ & $0.02$ & $2.10^{-3}$ & $\mathbf{-40.67}$ & $-11.40$\\
                \bottomrule
            \end{tabular}
        \label{table:nmse_inference}
    \end{table}

    \begin{figure*}
        \centering
        \includegraphics[width=2\columnwidth]{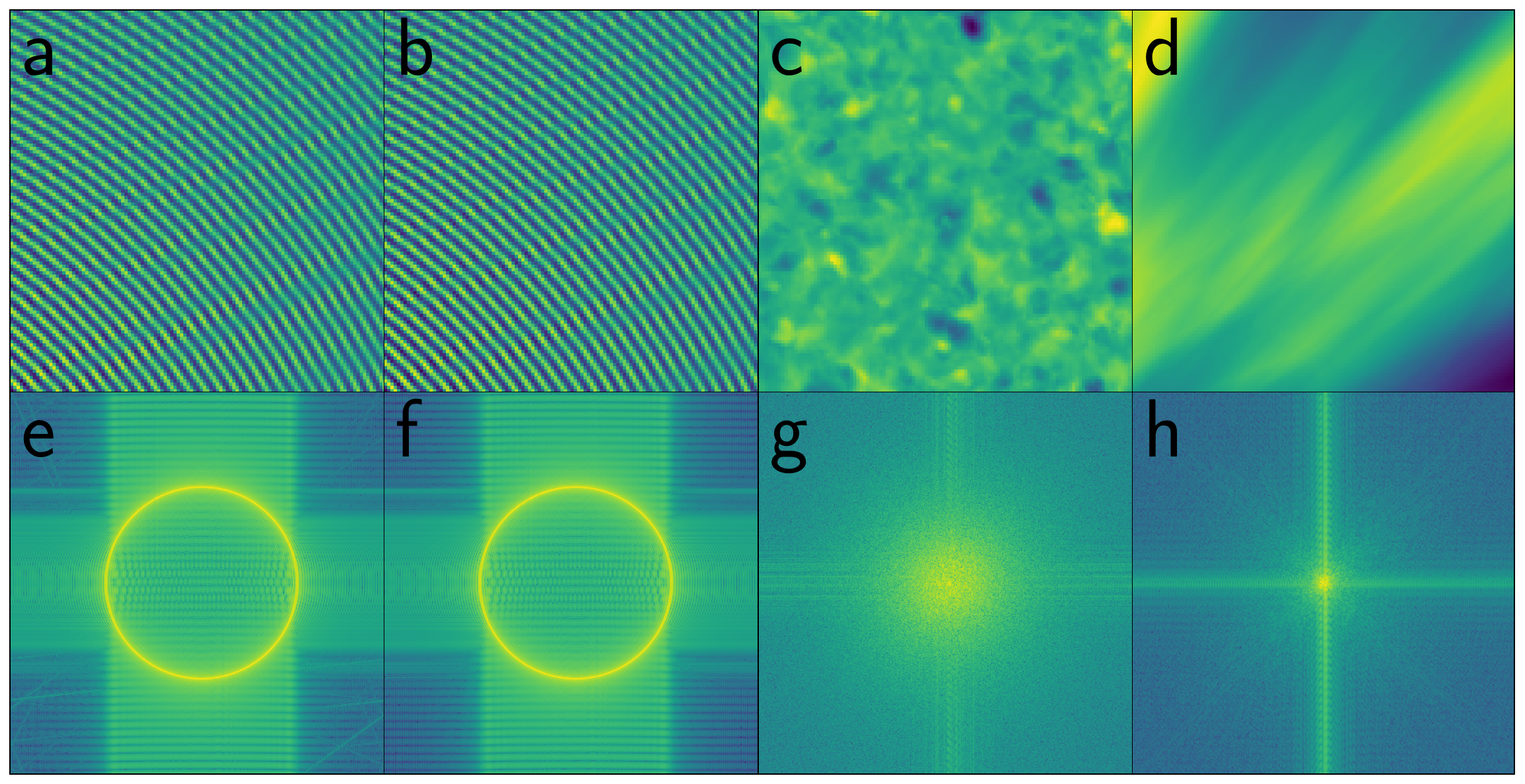}
        \caption{Top row: real part of reconstructed channel over a $2.5$m by $2.5$m zone of $\texttt{D}_1$, for the central antenna and frequency ($N_a =64$, $N_s=64$). Bottom row: spatial spectrum of the reconstructed channel over the entire $\texttt{D}_1$ scene (logarithmic scaling), for the central antenna and frequency ($N_a =64$, $N_s=64$). $\textsf{a}$/$\textsf{e}$: Ground Truth, $\textsf{b}$/$\textsf{f}$: MB-$\tilde{\mathbf{\Psi}}_{\mathbf{a}}$, $\textsf{c}$/$\textsf{g}$: RFF, $\textsf{d}$/$\textsf{h}$: MLP.}
        \label{fig:real_reconstruct}
    \end{figure*}

    \begin{figure*}[htbp]
        \centering
        \begin{minipage}[b]{0.49\textwidth}
            \centering
            \includegraphics[scale=.6]{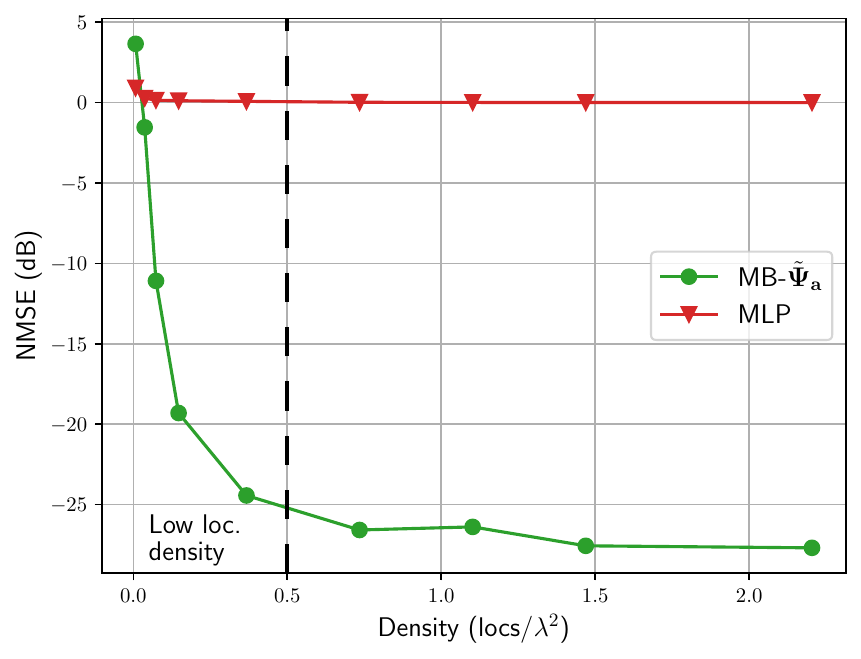}
            \caption{NMSE evolution with the training location density: MB-$\tilde{\mathbf{\Psi}}_{\mathbf{a}}$ learning, $\texttt{D}_1$ dataset ($N_a=64$, $N_s=64$).}
            \label{fig:NMSE_density_evolution}
            \vspace{0.3cm}
            \includegraphics[scale=.35]{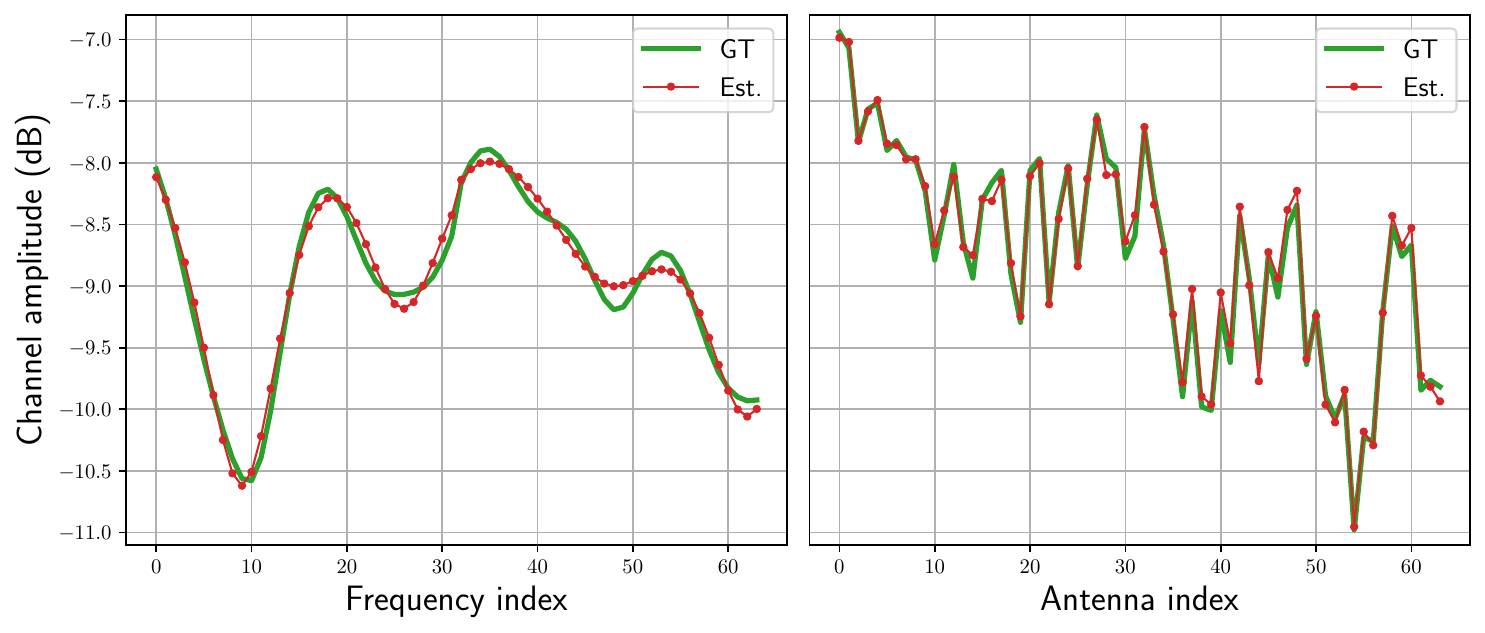}
            \caption{Reconstruction performance for a given location $\mathbf{x}_0$:  MB-$\tilde{\mathbf{\Psi}}_{\mathbf{a}}$ learning, $\texttt{D}_1$ dataset ($N_a=64$, $N_s=64$).}
            \label{fig:NMSE_ant_freq}
        \end{minipage}
        \hfill
        \begin{minipage}[b]{0.49\textwidth}
            \centering
            \includegraphics[scale=.35]{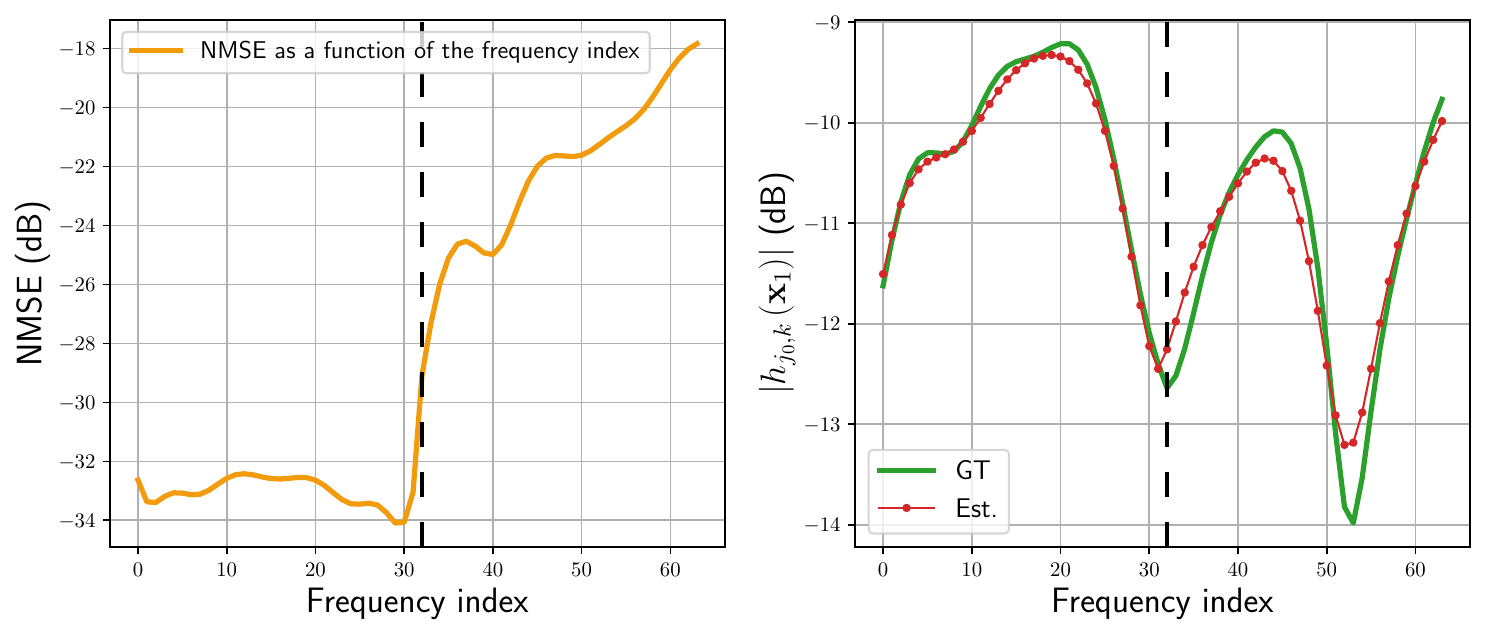}
            \caption{Frequency generalization performance: MB-$\tilde{\mathbf{\Psi}}_{\mathbf{a}}$ learning, $\texttt{D}_1$ dataset ($N_a=64$, $N_s=32$ for training, $N_s=64$ for testing).}
            \label{fig:freq_extrapol}
            \vspace{0.3cm}
            \includegraphics[scale=.6]{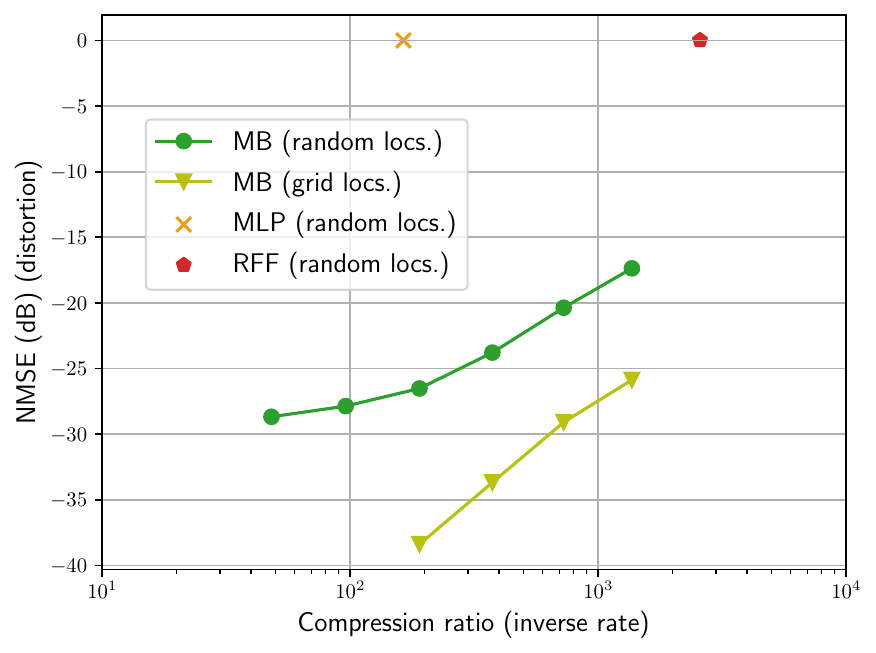}
            \caption{Learning performance evolution with the compression ratio: MB-$\tilde{\mathbf{\Psi}}_{\mathbf{a}}$ learning, $\texttt{D}_1$ dataset ($N_a=64$, $N_s=64$).}
            \label{fig:rate_distortion}
        \end{minipage}
    \end{figure*}
    
    \noindent\textbf{Generalization capabilities.} Fig.~\ref{fig:freq_extrapol} presents the frequency generalization performance of the MB-$\tilde{\mathbf{\Psi}}_{\mathbf{a}}$ model trained on the $\texttt{D}_1$ dataset with $N_s = 32$, and evaluated on $N_s=64$ frequencies. The dashed line in Fig.~\ref{fig:freq_extrapol} represents the train/test frequency separation. The NMSE, computed over the test locations, for each frequency at the central antenna, is unsurprisingly higher for the frequency unseen during frequency but still remains low. It highlights the generalization capabilities of the model-based network: as the FRV hypernetwork learns propagation delays, it successfully learns the physics of the propagation scene and is agnostic to the considered frequencies. This is illustrated on the right side of Fig.~\ref{fig:freq_extrapol}, where the reconstructed channel for a given location $\mathbf{x}_1$ is almost perfect on the frequencies used during training, and still follows the ground-truth for frequencies not seen during training.

    \noindent\textbf{Radio-environment compression.} Let $N_a = 64, N_s = 64$. Let us consider the transmission (or storage) of the test dataset: in addition to the location information, one has to transmit $2 N_a N_s N_l$ real numbers where $N_l$ is the number of locations. As the test location grid is very dense ($N_l \simeq 210$k), it could be more efficient to only transmit the weights of the trained model-based architecture. Indeed, as the MB-$\tilde{\mathbf{\Psi}}_{\mathbf{a}}$ model achieves good location-to-channel mapping learning, one can use this network and the transmitted locations to reconstruct the channel. As an example, when considering encoding of variables on $32$ bits, the channel coefficients over the test grid weight around $6.9$ Go while the $9.1$M learning parameters of the MB-$\tilde{\mathbf{\Psi}}_{\mathbf{a}}$ model (with $D=1000$) only weight around $36.4$ Mo. Let $N_b$ be the number of real learnable coefficients of the proposed neural architecture, the compression ratio is then defined as $R=2 N_a N_s N_l/N_b$.

    Fig.~\ref{fig:rate_distortion} presents the evolution of the learning performance with respect to the compression ratio $R$ for various networks. It can be shown that the number of sampled spatial frequencies $D$ is predominant on the learning error for the MB models. Therefore, it is proposed to only vary this parameter to vary $N_b$. The random locs. approach refers to the training on random locations with $1.3$ locs.$/\lambda^2$ spatial density, while the grid locs. refers to the training on the entire $\lambda/4$ uniform grid. One can remark in Fig.~\ref{fig:rate_distortion} that, even at high compression ratios, the NMSE of the MB-$\tilde{\mathbf{\Psi}}_{\mathbf{a}}$ model remains significantly low, highlighting the efficiency of the proposed approach for radio-environment compression. One can also note that, when considering the grid locs. approach, the model-based network reaches very good reconstruction performance, with a NMSE performance increase by around $10$dB in comparison to the random locs. approach. Therefore, it enables the transmission of the dataset through the network weights with minimal reconstruction error, at the expense of a longer training phase. One should note that the compression ratio is virtually infinite. Indeed, as the trained network efficiently learns the location-to-channel mapping, it is able to infer the channel matrix for any location in the considered scene with good performance. Finally, one can interpret this experiment under the rate-distortion paradigm: in this context, the distortion measure is represented by the NMSE in dB, while the rate is represented as the inverse of the compression ratio. As the compression ratio increases (i.e. the rate decreases), the proposed neural model is required to encode the channels using fewer learnable parameters, resulting in a higher NMSE, which indicates an increase in distortion. This behavior aligns with the rate-distortion theory, where the trade-off between distortion and rate is investigated: in Fig.~\ref{fig:rate_distortion}, it can be seen seen that the distortion is nearly logarithmically proportional to the rate inverse. This implies that a slight decrease in rate (or equivalently, an increase in the compression ratio) results only in a modest increase in distortion.

    \noindent\textbf{Performance evolution with $N_a$ and $N_s$.} Fig.~\ref{fig:freq_side_effects} exposes the performance evolution along the antenna array and system frequencies for the MB-$\tilde{\mathbf{u}}$ learning network, trained on the $\texttt{D}_1$ dataset with $N_a = 64$, $N_s=64$. It presents the NMSE along the antenna indexes, computed over the test grid at the central frequency, and the NMSE along the frequency indexes, computed over the test grid at the central antenna. One can observe that, while the NMSE is almost constant over the different frequencies, it rises on the antenna array sides. This can be explained through the previous theoretical developments: the model-based network has been designed with two Taylor expansions, one on the locations and another on the antennas. When the considered antenna array is large, the antenna correction terms fail for the array sides due the local nature of the Taylor expansion. This results in an increased error response on the antenna array sides. On the other hand, as the frequency correction terms do not originate from an approximation (see Eq.~\eqref{eq:app_intermedary} in Appendix~\ref{appendix:c}), the error response over the different system frequencies does not exhibit error spikes. It motivates the use of the MB-$\tilde{\mathbf{\Psi}}_{\mathbf{a}}$ model which drops the model-based constraints for the FRV dictionary, overcoming this side effect. Note that a more extensive comparison between the two architectures is presented in the following experiment.

    \begin{figure*}[htbp]
        \centering
        \begin{minipage}[b]{0.49\textwidth}
            \centering
            \includegraphics[scale=.5]{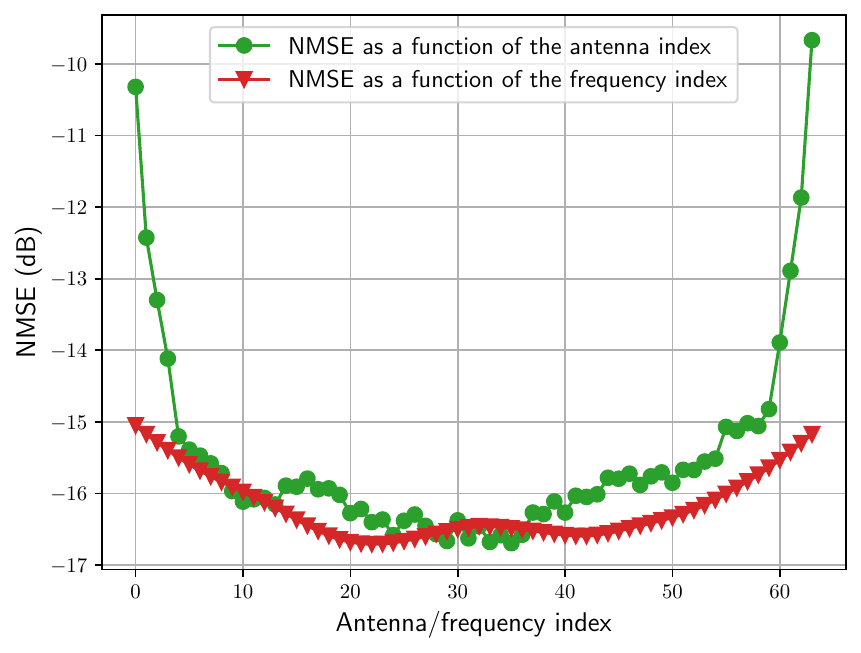}
            \caption{NMSE evolution with antenna/frequency index: MB-$\tilde{\mathbf{u}}$ learning,\\ $\texttt{D}_1$ dataset ($N_a=64$, $N_s=64$).}
            \label{fig:freq_side_effects}
            \vspace{0.3cm}
            \includegraphics[scale=.5]{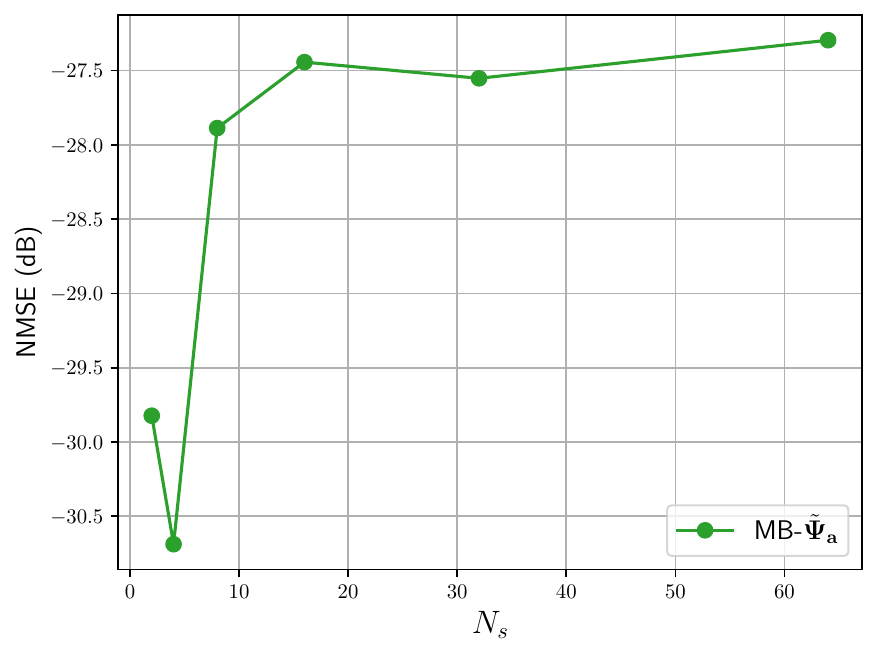}
            \caption{NMSE evolution with $N_s$: MB-$\tilde{\mathbf{\Psi}}_{\mathbf{a}}$ learning, $\texttt{D}_1$ dataset ($N_a=2$).}
            \label{fig:Ns_NMSE}
        \end{minipage}
        \hfill
        \begin{minipage}[b]{0.49\textwidth}
            \centering
            \includegraphics[scale=.7]{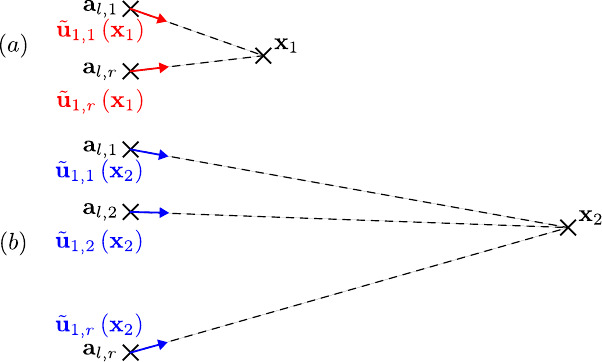}
            \caption{Failure scenarios for the DoD equality hypothesis: $\left(a\right)$ locations close to the antenna array, $\left(b\right)$: locations far from a large antenna array.}
            \label{fig:DoD_hypo}
            \vspace{0.3cm}
            \includegraphics[scale=.5]{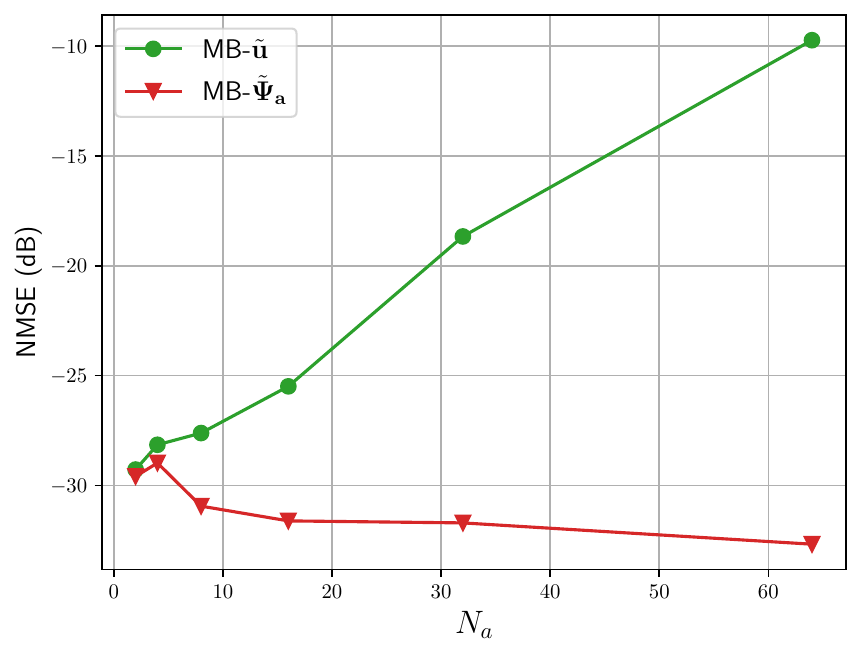}
            \caption{NMSE evolution with $N_a$: $\texttt{D}_3$ dataset ($N_s=2$).}
            \label{fig:T4_mb_utilde}
        \end{minipage}
    \end{figure*}

    Fig.~\ref{fig:Ns_NMSE} presents the performance evolution with respect to the number of frequencies by setting $N_a = 2$ and letting $N_s$ vary over the fixed $50$ MHz bandwidth, for the MB-$\tilde{\mathbf{\Psi}}_{\mathbf{a}}$ model. One can remark that the NMSE curve presents a minor performance degradation when $N_s$ increases from $1$ to $8$ and then remains constant for higher values of $N_s$. As the FRV hypernetwork is independent on the number of frequencies, increasing the number of frequencies is equivalent to considering a harder learning problem with the same number of learning parameters. This explains the minor performance degradation when $N_s$ increases. However note that, even with $N_s = 64$, the NMSE performance remains satisfactory. Additionally, the adaptability of the proposed architecture is enhanced: in~\cite{chatelier2023modelbased}, the proposed network was only able to learn the location-to-channel mapping for a unique antenna and frequency. Directly transposing the model-based architecture of~\cite{chatelier2023modelbased} for multi-antenna/multi-frequency would require one parallel model for each wanted antenna/frequency pair. In this paper, the proposed model-based architecture directly scales with the number of frequencies: as the FRV hypernetwork is independent on $N_s$, one can use the same network for any wanted $N_s$ with the same learning-parameter complexity. The only complexity difference comes from Kronecker product of Eq.~\eqref{eq:global_matrix_form}, which is negligible compared to the total numbers of operations in the neural network.

    \noindent\textbf{MB-$\tilde{\mathbf{\Psi}}_{\mathbf{a}}$ learning vs MB-$\tilde{\mathbf{u}}$ learning.} During the theoretical developments of Proposition~\ref{proposition:channel_approx}, the hypothesis of DoD equality over the entire antenna array has been made to obtain Eq.~\eqref{eq:chan_approx}. This hypothesis fails in two scenarios, presented in Fig.~\ref{fig:DoD_hypo}: when the considered scene is close to the antenna array, and when the considered scene is far from a large antenna array.

    Fig.~\ref{fig:T4_mb_utilde} presents the performance evolution with respect to the number of antennas for $N_s = 2$ and both the MB-$\tilde{\mathbf{\Psi}}_{\mathbf{a}}$ and MB-$\tilde{\mathbf{u}}$ models trained on the $\texttt{D}_3$ dataset. One can remark that the NMSE increases with the number of antennas for the MB-$\tilde{\mathbf{u}}$ model: this is an immediate consequence of the aforementioned failure scenarios. As the propagation scene for the $\texttt{D}_3$ dataset consists of a small zone around the BS, when one increases the number of antennas, the DoD equality across antennas does not hold anymore, resulting in a performance loss. This phenomenon is absent for the MB-$\tilde{\mathbf{\Psi}}_{\mathbf{a}}$ model: in this architecture, the entire FRV matrix is learned, overcoming the DoD assumption issue.

    \subsection{Areas of applications}
    Previous experiments assumed perfect knowledge of the location where the channel needed to be evaluated. However, if an offset arises from an imperfect location estimation process, the predicted channel from the trained model will correspond to the channel at the offset location. Since the channel varies rapidly with the location (on the order of the wavelength), the error between the true channel and the predicted channel will be significant, thereby limiting the application of the proposed method for precise channel estimation. Notwithstanding, the proposed method can be considered for different applications, some of which are outlined below.
    
    \noindent\textbf{Radio-environment compression.} As discussed in Section~\ref{subsec:experiment_results}, the model's ability to learn the location-to-channel mapping enables the storage of channel coefficients within its learnable parameters. Fig.~\ref{fig:rate_distortion} illustrates the reconstruction performance as a function of the compression ratio, demonstrating that the test dataset can be compressed up to a compression ratio of $10^3$ without significant reconstruction error.

    \noindent\textbf{UE localization.} Let $\mathbf{x}_1 \in \mathbb{R}^3$ be a UE physical location and $\mathbf{H}\left(\mathbf{x}_1\right) \in \mathbb{C}^{N_a \times N_s}$ the corresponding channel coefficients. One might inquire the following question: \textit{Is it possible to estimate $\mathbf{x}_1$ using the trained neural architecture?} As the proposed trained neural model achieves the location-to-channel mapping learning at the wavelength level, it has the potential to enable a highly precise localization process. Two different localization approaches that uses the trained model are outlined below.
    \begin{itemize}
        \item \textit{Gradient-descent approach.} Although directly inverting the forward pass of the trained model to perform localization is not feasible, the differentiability of the model facilitates an alternative localization process summarized below:
        \begin{enumerate}
            \item Fix the neural model learnable parameters.
            \item Define a new learnable parameter $\tilde{\mathbf{x}} \in \mathbb{R}^{3}$.
            \item Estimate $\mathbf{H}\left(\tilde{\mathbf{x}}\right)$ with the forward pass of the trained neural model: $f_{\boldsymbol{\theta}}\left(\tilde{\mathbf{x}}\right)$.
            \item Update $\tilde{\mathbf{x}}$ through several gradient descent steps: $\tilde{\mathbf{x}} \leftarrow \tilde{\mathbf{x}} - \gamma \nabla_{\tilde{\mathbf{x}}} \norm{\mathbf{H}\left(\mathbf{x}_1\right) - f_{\boldsymbol{\theta}}\left(\tilde{\mathbf{x}}\right)}{F}$. The goal is to minimize $\norm{\mathbf{H}\left(\mathbf{x}_1\right) - f_{\boldsymbol{\theta}}\left(\tilde{\mathbf{x}}\right)}{F}$ so that $\tilde{\mathbf{x}} \rightarrow \mathbf{x}_1$.
        \end{enumerate}
        \item \textit{Local exhaustive search approach.} It is assumed that, in addition to the channel coefficients $\mathbf{H}\left(\mathbf{x}_1\right)$, a rough estimate of $\mathbf{x}_1$, namely $\hat{\mathbf{x}}_1$ is known. It is then possible to perform localization using a location grid and the trained model, as summarized below:
        \begin{enumerate}
            \item Generate a location grid $\left\{\tilde{\mathbf{x}}_i\right\}_{i=1}^N$ around $\hat{\mathbf{x}}_1$.
            \item Estimate the channel at each grid location using the trained model: $\left\{f_{\boldsymbol{\theta}}\left(\tilde{\mathbf{x}}_i\right)\right\}_{i=1}^N$.
            \item Update the estimated location: $\hat{\mathbf{x}}_1 \leftarrow \tilde{\mathbf{x}}_{i^\star}$ where $i^\star = \argmin_{i} \norm{\mathbf{H}\left(\mathbf{x}_1\right) - f_{\boldsymbol{\theta}}\left(\tilde{\mathbf{x}}_i\right)}{F}$.
        \end{enumerate}
    \end{itemize}

    \noindent\textbf{General communication tasks.} Upon training completion, the BS could use this neural model to obtain a rough estimate of the channel coefficients for different UEs, without the need of sending pilot symbols. This could be useful for many tasks: efficient beam management, resource allocation, interference management, secure communication mechanisms\dots

\section{Conclusion and future work}\label{sec:conclusion}
    This paper presented a study on the location-to-channel mapping learning problem. Through analytical developments based on Taylor expansions of the propagation distance, a model-based neural architecture was proposed. Its performance have been studied on realistic channels against several baselines from the Implicit Neural Representation literature, showing great mapping learning performance in several realistic propagation scenes. It also showed that the proposed architecture overcame the spectral-bias issue. Additionally, the use of the proposed network for radio-environment compression has been exposed. Transmitting the parameters of the trained model-based model and reconstructing the channel coefficients at different locations has been proven to be more efficient than directly transmitting the channel coefficients, with compression ratio reaching $10^3$ without major performance loss. Finally, the theoretical developments made to obtain the approximated channel model have been leveraged to explain the model-based architecture performance in several scenarios, demonstrating the great interpretability of the model-based machine learning paradigm.

    Future work will consider the optimization of the proposed architecture learning-parameter number. This could be done by reducing the number of planar wavefronts and optimizing them through an update rule or gradient descent. Future work could also consider the proposed method in a time-varying scene, i.e. with mobility. It may be possible to train the scene from channel samples in the static scenario and then use an online learning strategy to continuously adapt the learned mapping with channel samples taking into account mobility.

\bibliographystyle{IEEEtran}
\bibliography{./refs/biblio.bib}

\appendices

\section{Proof of Lemma~\ref{lemma:taylor_approx}}\label{apendix:taylor}
    Let $\mathbf{x}_r \in \mathbb{R}^3$ be a reference location and $\mathcal{D}_{\mathbf{x}} \subset \mathbb{R}^3$ be a local validity domain such that $\forall \mathbf{x} \in \mathcal{D}_{\mathbf{x}}, \norm{\mathbf{x}-\mathbf{x}_r}{2} \leq \epsilon_{\mathbf{x}}$. The same holds true in the antenna subspace with the reference antenna location $\mathbf{a}_{l,r} \in \mathbb{R}^3$ and local validity domain $\mathcal{D}_{\mathbf{a}} \subset \mathbb{R}^3$ such that $\forall \mathbf{a}_{l,j} \in \mathcal{D}_{\mathbf{a}}, \norm{\mathbf{a}_{l,j}-\mathbf{a}_{l,r}}{2} \leq \epsilon_{\mathbf{a}}$. Let $\xi \left(\mathbf{x},\mathbf{a}_{l,j}\right) \triangleq \norm{\mathbf{x}-\mathbf{a}_{l,j}}{2}$. $\xi \left(\mathbf{x},\mathbf{a}_{l,j}\right)$ is differentiable at $\mathbf{x} = \mathbf{x}_r$. The first order Taylor expansion of $\xi \left(\mathbf{x},\mathbf{a}_{l,j}\right)$ around $\mathbf{x}_r$ yields, $\forall \mathbf{x} \in \mathcal{D}_{\mathbf{x}}$:
    \begin{align}
        \xi\left(\mathbf{x},\mathbf{a}_{l,j}\right) &\simeq \xi\left(\mathbf{x}_r,\mathbf{a}_{l,j}\right) + \left. \nabla_{\mathbf{x}} \norm{\mathbf{x}-\mathbf{a}_{l,j}}{2} \right|_{\mathbf{x}_r}^\transp \left(\mathbf{x}-\mathbf{x}_r\right)\nonumber\\
        &= \norm{\mathbf{x}_r -\mathbf{a}_{l,j}}{2} + \dfrac{\left(\mathbf{x}_r - \mathbf{a}_{l,j}\right)^\transp}{\norm{\mathbf{x}_r -\mathbf{a}_{l,j}}{2}}\left(\mathbf{x}-\mathbf{x}_r\right)\nonumber\\
        &= \norm{\mathbf{x}_r -\mathbf{a}_{l,j}}{2} + \mathbf{u}_{l,j}\left(\mathbf{x}_r\right)^\transp\left(\mathbf{x}-\mathbf{x}_r\right).\label{eq:intermerdary_ref_dist}
    \end{align}
    One has $\xi\left(\mathbf{x}_r,\mathbf{a}_{l,j}\right) = \norm{\mathbf{x}_r -\mathbf{a}_{l,j}}{2}$. $\xi \left(\mathbf{x}_r,\mathbf{a}_{l,j}\right)$ is differentiable at $\mathbf{a}_{l,j} = \mathbf{a}_{l,r}$. As per the previous development, the first order Taylor expansion of $\xi \left(\mathbf{x}_r,\mathbf{a}_{l,j}\right)$ around $\mathbf{a}_{l,r}$ yields, $\forall \mathbf{a}_{l,j} \in \mathcal{D}_{\mathbf{a}}$:
    \begin{equation}
        \xi \left(\mathbf{x}_r,\mathbf{a}_{l,j}\right) \simeq \norm{\mathbf{x}_r-\mathbf{a}_{l,r}}{2} - \mathbf{u}_{l,r}\left(\mathbf{x}_r\right)^\transp \left(\mathbf{a}_{l,j}-\mathbf{a}_{l,r}\right).
    \end{equation}
    Finally one has: $\forall \left(\mathbf{x},\mathbf{a}_{l,j}\right) \in \mathcal{D}_{\mathbf{x}} \times \mathcal{D}_{\mathbf{a}}: \norm{\mathbf{x}-\mathbf{a}_{l,j}}{2} \simeq \norm{\mathbf{x}_r-\mathbf{a}_{l,r}}{2} + \mathbf{u}_{l,j}\left(\mathbf{x}_r\right)^\transp\left(\mathbf{x}-\mathbf{x}_r\right) - \mathbf{u}_{l,r}\left(\mathbf{x}_r\right)^\transp \left(\mathbf{a}_{l,j}-\mathbf{a}_{l,r}\right)$.

\section{Proof of Corollary~\ref{corollary:error_approx}}\label{appendix:cor_1_proof}
    Let $\mathbf{x}_r \in \mathbb{R}^3$ and $\mathbf{a}_{l,r} \in \mathbb{R}^3$ be a reference location and a reference antenna location respectively. Let $\mathbf{\Gamma}_{\mathbf{x}_r} \triangleq \left(\mathbf{x}_r - \mathbf{a}_{l,j}\right)\left(\mathbf{x}_r - \mathbf{a}_{l,j}\right)^\transp$, and $\mathbf{\Gamma}_{\mathbf{a}_r} \triangleq \left(\mathbf{a}_{l,r} - \mathbf{x}_r\right)\left(\mathbf{a}_{l,r} - \mathbf{x}_r\right)^\transp$. Let $\psi\left(\mathbf{x}\right)$ be a double differentiable function at $\mathbf{x} = \mathbf{x}_r$. The second order of the Taylor expansion of $\psi\left(\mathbf{x}\right)$ can be expressed as $e = \frac{1}{2} \left(\mathbf{x}-\mathbf{x}_r\right)^\transp \left. \nabla_{\mathbf{x}} \nabla_{\mathbf{x}}^\transp \psi\left(\mathbf{x}\right) \right|_{\mathbf{x}_r} \left(\mathbf{x}-\mathbf{x}_r\right)$. Letting $\psi\left(\mathbf{x}\right) = \norm{\mathbf{x}-\mathbf{a}_{l,j}}{2}$ and recalling that $\forall \mathbf{x} \in \mathbb{R}^3, \nabla_{\mathbf{x}} \nabla_{\mathbf{x}}^\transp \norm{\mathbf{x}}{2} = \mathbf{Id}_2/\norm{\mathbf{x}}{2}-\mathbf{x}\mathbf{x}^\transp/\norm{\mathbf{x}}{2}^3$ yields:
    \begin{align}
        e_1 &=  \dfrac{1}{2}\left(\dfrac{\norm{\mathbf{x}-\mathbf{x}_r}{2}^2}{\norm{\mathbf{x}_r-\mathbf{a}_{l,j}}{2}}-\dfrac{\left(\mathbf{x}-\mathbf{x}_r\right)^\transp \mathbf{\Gamma}_{\mathbf{x}_r}\left(\mathbf{x}-\mathbf{x}_r\right)}{\norm{\mathbf{x}_r-\mathbf{a}_{l,j}}{2}^3}\right)\nonumber\\
        &= \dfrac{1}{2}\left(\dfrac{\norm{\mathbf{x}-\mathbf{x}_r}{2}^2}{\norm{\mathbf{x}_r-\mathbf{a}_{l,j}}{2}}-o\left(\dfrac{1}{\norm{\mathbf{x}_r - \mathbf{a}_{l,j}}{2}^2}\right)\right),
    \end{align}
    when $\left(\mathbf{x}-\mathbf{x}_r\right)^\transp \mathbf{\Gamma}_{\mathbf{x}_r}\left(\mathbf{x}-\mathbf{x}_r\right)/\norm{\mathbf{x}_r - \mathbf{a}_{l,j}}{2} \rightarrow 0$. 
    The same approach applied to $\psi\left(\mathbf{a}_{l,j}\right) = \norm{\mathbf{a}_{l,j}-\mathbf{x}_{r}}{2}$ yields:
    \begin{align}
        e_2 &= \dfrac{1}{2}\left(\dfrac{\norm{\mathbf{a}_{l,j}-\mathbf{a}_{l,r}}{2}^2}{\norm{\mathbf{x}_r-\mathbf{a}_{l,r}}{2}}-\dfrac{\left(\mathbf{a}_{l,j}-\mathbf{a}_{l,r}\right)^\transp \mathbf{\Gamma}_{\mathbf{a}_r}\left(\mathbf{a}_{l,j}-\mathbf{a}_{l,r}\right)}{\norm{\mathbf{x}_r-\mathbf{a}_{l,r}}{2}^3}\right)\nonumber\\
        &= \dfrac{1}{2}\left(\dfrac{\norm{\mathbf{a}_{l,j}-\mathbf{a}_{l,r}}{2}^2}{\norm{\mathbf{x}_r-\mathbf{a}_{l,r}}{2}}-o\left(\dfrac{1}{\norm{\mathbf{x}_r - \mathbf{a}_{l,r}}{2}^2}\right)\right),
    \end{align}
    when $\left(\mathbf{a}_{l,j}-\mathbf{a}_{l,r}\right)^\transp \mathbf{\Gamma}_{\mathbf{a}_r}\left(\mathbf{a}_{l,j}-\mathbf{a}_{l,r}\right)/\norm{\mathbf{x}_r-\mathbf{a}_{l,r}}{2} \rightarrow 0$. Considering $e=e_1+e_2$ concludes the proof.
\flushcolsend
\section{Proof of Proposition~\ref{proposition:channel_approx}}\label{appendix:c}
    Let $f_r \in \mathbb{R}$ such that $\forall f_k \in \mathbb{R}, f_k = \left(f_k-f_r\right)+f_r$. Eq.~\eqref{eq:virtual_sources_model} can be rewritten as:
    \begin{equation}\label{eq:app_intermedary}
        h_{j,k}\left(\mathbf{x}\right) = \sum_{l=1}^{L_p} \dfrac{\gamma_l}{\norm{\mathbf{x}-\mathbf{a}_{l,j}}{2}} \mathrm{e}^{-\mathrm{j}\frac{2\pi}{\lambda_r}\norm{\mathbf{x}-\mathbf{a}_{l,j}}{2}}\mathrm{e}^{-\mathrm{j}\frac{2\pi}{\lambda_{k-r}}\norm{\mathbf{x}-\mathbf{a}_{l,j}}{2}},
    \end{equation}
    where $\lambda_r$, resp. $\lambda_{k-r}$, is the wavelength associated to frequency $f_r$, resp. $f_k - f_r$. Let $\boldsymbol{\delta}_{\mathbf{x}} \triangleq \mathbf{x}-\mathbf{x}_{r}$ and $\boldsymbol{\delta}_{\mathbf{a}} \triangleq \mathbf{a}_{l,j}-\mathbf{a}_{l,r}$. Introducing Eq.~\eqref{eq:norm_approx} of Lemma~\ref{lemma:taylor_approx} in Eq.~\eqref{eq:app_intermedary} yields $\forall \left(\mathbf{x},\mathbf{a}_{l,j}\right) \in \mathcal{D}_{\mathbf{x}} \times \mathcal{D}_{\mathbf{a}}$:
    \begin{align}
        h_{j,k}\left(\mathbf{x}\right) \simeq \sum_{l=1}^{L_p} &\dfrac{\gamma_l}{\norm{\mathbf{x}-\mathbf{a}_{l,j}}{2}} \mathrm{e}^{-\mathrm{j}\frac{2\pi}{\lambda_{k-r}}\norm{\mathbf{x}_r-\mathbf{a}_{l,r}}{2}}\\
        &\cdot\mathrm{e}^{-\mathrm{j}\frac{2\pi}{\lambda_{k-r}}\mathbf{u}_{l,j}\left(\mathbf{x}_r\right)^\transp \boldsymbol{\delta}_{\mathbf{x}}} \mathrm{e}^{\mathrm{j}\frac{2\pi}{\lambda_{k-r}} \mathbf{u}_{l,r}\left(\mathbf{x}_r\right)^\transp \boldsymbol{\delta}_{\mathbf{a}}} \nonumber \\
        &\cdot\mathrm{e}^{-\mathrm{j}\frac{2\pi}{\lambda_{r}}\norm{\mathbf{x}_r-\mathbf{a}_{l,r}}{2}} \mathrm{e}^{-\mathrm{j}\frac{2\pi}{\lambda_{r}}\mathbf{u}_{l,j}\left(\mathbf{x}_r\right)^\transp \boldsymbol{\delta}_{\mathbf{x}}}\nonumber \\
        &\cdot\mathrm{e}^{\mathrm{j}\frac{2\pi}{\lambda_{r}} \mathbf{u}_{l,r}\left(\mathbf{x}_r\right)^\transp \boldsymbol{\delta}_{\mathbf{a}}}. \nonumber
    \end{align}
    Considering that the reference frequency $f_r$ is on the same order than the current frequency $f_k$ gives: $f_k - f_r \ll f_r \Rightarrow \lambda_r/\lambda_{k-r} \ll 1$. By definition of the validity domains, one has $\forall \mathbf{a}_{l,j} \in \mathcal{D}_{\mathbf{a}}, \norm{\boldsymbol{\delta}_{\mathbf{a}}}{2} \leq \epsilon_{\mathbf{a}}$ and $\forall \mathbf{x} \in \mathcal{D}_{\mathbf{x}}, \norm{\boldsymbol{\delta}_{\mathbf{x}}}{2} \leq \epsilon_{\mathbf{x}}$. Thus, the location and antenna differences $\boldsymbol{\delta}_{\mathbf{a}}$ and $\boldsymbol{\delta}_{\mathbf{x}}$ can be expressed as a small number of the reference wavelength in both directions, i.e. $\exists \left(\mathbf{m}_{\mathbf{a}}, \mathbf{m}_{\mathbf{x}} \right) \in \mathbb{R}^3 \times \mathbb{R}^3 \text{ st. } \boldsymbol{\delta}_{\mathbf{a}} \simeq \mathbf{m}_{\mathbf{a}} \lambda_r,  \boldsymbol{\delta}_{\mathbf{x}} \simeq \mathbf{m}_{\mathbf{x}} \lambda_r$ with $\norm{\mathbf{m}_{\mathbf{a}}}{2}$ and $\norm{\mathbf{m}_{\mathbf{x}}}{2}$ small. Then, $\forall \left(\mathbf{x},\mathbf{a}_{l,j}\right) \in \mathcal{D}_{\mathbf{x}} \times \mathcal{D}_{\mathbf{a}}$:
    \begin{align}
        \begin{cases}
            \boldsymbol{\delta}_{\mathbf{a}} \simeq \mathbf{m}_{\mathbf{a}} \lambda_r\\
            \boldsymbol{\delta}_{\mathbf{x}} \simeq \mathbf{m}_{\mathbf{x}} \lambda_r
        \end{cases} &\Rightarrow \begin{cases}
            \abs{\dfrac{\mathbf{u}_{l,r}\left(\mathbf{x}_r\right)^\transp \boldsymbol{\delta}_{\mathbf{a}}}{\lambda_{k-r}}} \ll 1\\
            \abs{\dfrac{\mathbf{u}_{l,j}\left(\mathbf{x}_r\right)^\transp\boldsymbol{\delta}_{\mathbf{x}}}{\lambda_{k-r}}} \ll 1
        \end{cases}.
    \end{align}
    Furthermore, as in classical communication systems the inter-antenna spacing is on the order of the wavelength, one can use the following approximation: $\forall \mathbf{x}_r \in \mathbb{R}^3, \mathbf{u}_{l,j}\left(\mathbf{x}_r\right) \simeq \mathbf{u}_{l,r}\left(\mathbf{x}_r\right)$. Namely that the DoD towards the reference location $\mathbf{x}_r$, for each antenna, can be approximated as the DoD of the reference antenna. The limitations of this approximation are discussed in this paper. Introducing $d_{l,r} \triangleq \norm{\mathbf{x}_r - \mathbf{a}_{l,r}}{2}$, one then obtain, $\forall \left(\mathbf{x},\mathbf{a}_{l,j}\right) \in \mathcal{D}_{\mathbf{x}} \times \mathcal{D}_{\mathbf{a}}$:
    \begin{align}\label{eq:app_intermedary_2}
        h_{j,k}\left(\mathbf{x}\right) \simeq \sum_{l=1}^{L_p}& \dfrac{\gamma_l}{\norm{\mathbf{x}-\mathbf{a}_{l,j}}{2}}  \mathrm{e}^{-\mathrm{j}\frac{2\pi}{\lambda_{r}}d_{l,r}} \mathrm{e}^{-\mathrm{j}\frac{2\pi}{\lambda_{r}}\mathbf{u}_{l,r}\left(\mathbf{x}_r\right)^\transp \boldsymbol{\delta}_{\mathbf{x}}}\\
        &\cdot \mathrm{e}^{-\mathrm{j}\frac{2\pi}{\lambda_{k-r}}d_{l,r}}  \mathrm{e}^{\mathrm{j}\frac{2\pi}{\lambda_{r}}\mathbf{u}_{l,r}\left(\mathbf{x}_r\right)^\transp \boldsymbol{\delta}_{\mathbf{a}}}. \nonumber
    \end{align}
    Additionally, as $\norm{\mathbf{x}-\mathbf{a}_{l,j}}{2}$ is a slowly varying term, one can approximate it as $d_{l,r}$ without suffering from a significant approximation error. One can then simplify the first two terms of Eq.~\eqref{eq:app_intermedary_2} as $h_{l,r}\left(\mathbf{x}_r\right) \triangleq \mathrm{e}^{-\mathrm{j}\frac{2\pi}{\lambda_r}d_{l,r}}/d_{l,r}$. Finally, letting $\tau_{l,r} \triangleq d_{l,r}/c$ yields $2\pi d_{l,r}/\lambda_{k-r} = 2\pi\left(f_k-f_r\right) \tau_{l,r}$. One finally obtains: $\forall \left(\mathbf{x},\mathbf{a}_{l,j}\right) \in \mathcal{D}_{\mathbf{x}} \times \mathcal{D}_{\mathbf{a}}$:
    \begin{align}
        h_{j,k}\left(\mathbf{x}\right) \simeq \sum_{l=1}^{L_p}& \gamma_l h_{l,r}\left(\mathbf{x}_r\right) \mathrm{e}^{-\mathrm{j}\frac{2\pi}{\lambda_r}\mathbf{u}_{l,r}\left(\mathbf{x}_r\right)^\transp\left(\mathbf{x}-\mathbf{x}_r\right)}\\
        &\cdot \mathrm{e}^{-\mathrm{j}2\pi\left(f_k-f_r\right) \tau_{l,r}} \mathrm{e}^{\mathrm{j}\frac{2\pi}{\lambda_r}\mathbf{u}_{l,r}\left(\mathbf{x}_r\right)^\transp \left(\mathbf{a}_{l,j}-\mathbf{a}_{l,r}\right)}. \nonumber
    \end{align}

\section{Proof of Theorem~\ref{theorem:global_validity_approx}}\label{appendix:d}
    Let us consider the tiling of the location subset $\mathcal{S}_{\mathbf{x}} \subset \mathbb{R}^3$ into $ \Omega_{\mathbf{x}}$ location validity domains $\mathcal{D}_{\mathbf{x},i}$ with the Voronoi region of any given lattice. Note that for any subset of $\mathbb{R}^3$, the optimal lattice is the $D_3$-lattice. The same approach is applied for the antenna location subset $\mathcal{S}_{\mathbf{a}} \subset \mathbb{R}^3$, yielding $ \Omega_{\mathbf{a}}$ local validity domains $\mathcal{D}_{\mathbf{a},i}$. By definition of the local validity domains and of the Taylor expansion, one has, $\forall \left(\mathcal{D}_{\mathbf{x},i},\mathcal{D}_{\mathbf{a},i}\right)$: 
    \begin{equation}
        \forall \left(\mathbf{x},\mathbf{a}_{l,j}\right) \in \mathcal{D}_{\mathbf{x},i} \times \mathcal{D}_{\mathbf{a},i}, \norm{\mathbf{H}\left(\mathbf{x}\right)-\hat{\mathbf{H}}\left(\mathbf{x}\right)}{F} < \epsilon,
    \end{equation}
    where $\hat{\mathbf{H}}\left(\mathbf{x}\right)$ is the Taylor-approximated channel matrix computed using Eq.~\eqref{eq:matrix_chan_approx}. For each local validity domain $\left(\mathcal{D}_{\mathbf{x},i},\mathcal{D}_{\mathbf{a},i}\right)$  pair, Eq.~\eqref{eq:matrix_chan_approx} shows that the channel can be approximated using only $L_p$ planar wavefronts, SVs, and FRVs. Thus, one can construct a dictionary of planar wavefronts $\tilde{\boldsymbol{\psi}}_{\mathbf{x}} \in \mathbb{C}^D$, a dictionary of SVs $\tilde{\mathbf{\Psi}}_{\mathbf{a}} \in \mathbb{C}^{N_a \times D}$, and a dictionary of FRVs $\tilde{\mathbf{\Psi}}_{\mathbf{f}} \in \mathbb{C}^{N_s \times D}$ containing the needed atoms for every local validity domain pair. This yields $D \leq L_p \Omega_{\mathbf{x}} \Omega_{\mathbf{a}}$. 
    
    While the FRV dictionary atoms follow Eq.~\eqref{eq:freq_correction}, the SV dictionary atoms can't be constituted from Eq.~\eqref{eq:ant_correction}, as only the true antenna locations $\mathbf{a}_{1,j}$ are known. However, every virtual antenna can be rewritten as a rotated and translated version of its physical counterpart, i.e. $\forall l\neq 1,\exists! \left(\mathbf{R}_{\theta,l},\boldsymbol{\epsilon}_l\right) \in \mathbb{R}^{3 \times 3} \times  \mathbb{R}^{3}, \text{ s.t. } \mathbf{a}_{l,j} = \mathbf{R}_{\theta,l} \mathbf{a}_{1,j} + \boldsymbol{\epsilon}_l$. One then obtains:
    \begin{align}
        \left(\mathbf{a}_{l,j}-\mathbf{a}_{l,r}\right) &= \mathbf{R}_{\theta,l}\mathbf{a}_{1,j}+\boldsymbol{\epsilon}_l - \mathbf{R}_{\theta,l}\mathbf{a}_{1,r}-\boldsymbol{\epsilon}_l\nonumber\\
        &= \mathbf{R}_{\theta,l} \left(\mathbf{a}_{1,j}-\mathbf{a}_{1,r}\right).
    \end{align}
    While the antenna difference is not impacted by translations, it is impacted by rotations. However, one can easily show that the projection term in the SV is rotation equivariant:
    \begin{align}
        \mathbf{u}_{l,r}\left(\mathbf{x}_r\right)^\transp \left(\mathbf{a}_{l,j}-\mathbf{a}_{l,r}\right) &= \mathbf{u}_{l,r}\left(\mathbf{x}_r\right)^\transp \mathbf{R}_{\theta,l} \left(\mathbf{a}_{1,j}-\mathbf{a}_{1,r}\right)\nonumber\\
        &= \left(\mathbf{R}_{\theta,l}^\transp\mathbf{u}_{l,r}\left(\mathbf{x}_r\right)\right)^\transp\left(\mathbf{a}_{1,j}-\mathbf{a}_{1,r}\right)\nonumber\\
        &= \tilde{\mathbf{u}}_{l,r}\left(\mathbf{x}_r\right)^\transp\left(\mathbf{a}_{1,j}-\mathbf{a}_{1,r}\right).
    \end{align}
    Thus, the dictionary of SVs can be constructed using DoDs $\tilde{\mathbf{u}}_i \in \mathbb{R}^3$ and the true antenna location difference $\left(\mathbf{a}_{1,j}-\mathbf{a}_{1,r}\right)$.
    Using the previously defined dictionaries and introducing an activation vector $\mathbf{w}\left(\mathbf{x}\right) \in \mathbb{C}^D$ to select the needed $L_p$ atoms at the current location $\mathbf{x}$, such that $\boldsymbol{\varpi}\left(\mathbf{x}\right) = \mathbf{w}\left(\mathbf{x}\right) \odot  \tilde{\boldsymbol{\psi}}_{\mathbf{x}}\left(\mathbf{x}\right)$, yields Eq.~\eqref{eq:global_approx} and concludes the proof.
    
\end{document}